\documentclass[journal,twoside,web]{ieeecolor}
\usepackage{generic}
\usepackage{amsmath,amssymb,amsfonts}
\usepackage{algorithmic}
\usepackage{textcomp}

\usepackage{graphicx,color}
\usepackage{fancyhdr}
\usepackage{epsfig} % for postscript graphics files
\usepackage{psfrag}
\usepackage{cite}
\usepackage{dsfont}  % assumes amsmath package installed
\usepackage{todonotes}
\usepackage{tikz}
\usetikzlibrary{arrows,shapes,automata,backgrounds,petri,calc}

\usepackage{booktabs}
\usepackage[export]{adjustbox}
\usepackage{bbm}

\usepackage{caption}
\usepackage{subcaption}
\captionsetup{font=scriptsize}
\captionsetup[sub]{font=scriptsize}

\DeclareMathOperator{\E}{\mathsf{E}}

\DeclareMathOperator{\Tr}{\textsf{Tr}}
\DeclareMathOperator*{\argmin}{\arg\!\min}

\newtheorem{theorem}{Theorem}
\newtheorem{remark}{Remark}

\newtheorem{corollary}{Corollary}

%\linespread{1.5}

\def\BibTeX{{\rm B\kern-.05em{\sc i\kern-.025em b}\kern-.08em
    T\kern-.1667em\lower.7ex\hbox{E}\kern-.125emX}}
\markboth{Preperint -- To Appear in the IEEE Transactions on Control of Network Systems, VOL. XX, NO. XX, XXXX 2020}
{Mamduhi \MakeLowercase{\textit{et al.}}: Delay-sensitive Joint Optimal Control and Resource Management in Multi-loop NCS (July 2020)}

\begin{document}
\title{Delay-sensitive Joint Optimal Control \\and Resource Management in Multi-loop \\Networked Control Systems}
\author{Mohammad H. Mamduhi, Dipankar Maity, \IEEEmembership{Member, IEEE}, Sandra Hirche, \IEEEmembership{Senior Member, IEEE}, John S. Baras, \IEEEmembership{Life Fellow, IEEE}, and Karl H. Johansson, \IEEEmembership{Fellow, IEEE}
\thanks{02.07.2020. 
%This paragraph of the first footnote will contain the date on 
%which you submitted your paper for review. It will also contain support 
%information, including sponsor and financial support acknowledgment. For 
%example, 
``This work was supported by the Knut and Alice Wallenberg Foundation, the Swedish Strategic Research Foundation, and the Swedish Research Council.'' 
}
\thanks{M. H. Mamduhi, J. S. Baras and K. H. Johansson are with the Division of Decision and Control Systems, KTH Royal Institute of Technology, Malvinas v\"ag 10, 10044, Stockholm, Sweden (e-mail: \{mamduhi,baras,kallej\}@kth.se). }
\thanks{D. Maity is with the Guggenheim School of Aerospace Engineering, Georgia Institute of Technology, 270 Ferst Dr. Atlanta, GA 30332-0150, USA (e-mail: dmaity@gatech.edu).}
\thanks{S. Hirche is with the Chair of Information-oriented Control, Technical University of Munich, Theresienstr. 90, 80333 Munich, Germany (e-mail: hirche@tum.de).}
\thanks{J. S. Baras is also with the Institute of Systems Research, The University of Maryland, 20742, College Park, MD, USA (e-mail: baras@umd.edu).}
}

\maketitle

\begin{abstract}
In the operation of networked control systems, where multiple processes share a resource-limited and time-varying cost-sensitive network, communication delay is inevitable and primarily influenced by, first, the control systems deploying intermittent sensor sampling to reduce the communication cost by restricting non-urgent transmissions, and second, the network performing resource management to minimize excessive traffic and eventually data loss. 
In a heterogeneous scenario, where control systems may tolerate only specific levels of sensor-to-controller latency, delay sensitivities need to be considered in the design of control and network policies to achieve the desired performance guarantees. We propose a cross-layer optimal co-design of control, sampling and resource management policies for an NCS consisting of multiple stochastic linear time-invariant systems which close their sensor-to-controller loops over a shared network. Aligned with advanced communication technology, we assume that the network offers a range of latency-varying transmission services for given prices. Local samplers decide either to pay higher cost to access a low-latency channel, or to delay sending a state sample at a reduced price. A resource manager residing in the network data-link layer arbitrates channel access and re-allocates resources if link capacities are exceeded. The performance of the local closed-loop systems is measured by a combination of linear-quadratic Gaussian cost and a suitable communication cost, and the overall objective is to minimize a defined \textit{social cost} by all three policy makers. We derive optimal control, sampling and resource allocation policies under different cross-layer awareness models, including constant and time-varying parameters, and show that higher awareness generally leads to performance enhancement at the expense of higher computational complexity. This trade-off is shown to be a key feature to select the proper interaction structure for the co-design architecture. %The approach is successfully applied in a simulation example with twenty processes.
\end{abstract}

\begin{IEEEkeywords}
Cross-layer information structure, joint optimal co-design, latency-varying transmission services, networked control systems, service constraints. 
%key words or phrases in alphabetical 
%order, separated by commas. For a list of suggested keywords, send a blank 
%e-mail to keywords@ieee.org or visit \underline
%{http://www.ieee.org/organizations/pubs/ani\_prod/keywrd98.txt}
\end{IEEEkeywords}

%%%%%%%%%%%%%%%%%%%%%%%%%%%%%%%%%%%%%%%%%%%%%%%%%%%%%%%%%%%%%%%%%%%%%%%%%%%%%%
\section{MOTIVATION And INTRODUCTION}
The design and operation of networked control systems (NCSs), wherein multiple control loops exchange information between their sensors, controllers and actuators via a common communication network, requires a major rethinking to respond to the growing requirements from current and future applications. 
The introduction of communication technologies that provide demand-driven serviceability with adjustable parameters and prices, together with novel approaches to virtually program  network functions and adaptable network features, have created a significant potential to bring control and networking architectures to a whole new level \cite{MOLINA2018407,BORDEL2017156}. This generally means moving from the traditional throughput-oriented and latency-minimizing data transmission with asymptotic-type performance guarantees, to smart data coordination schemes that consider real-time requirements and limitations of both the service providers and service recipients. 

In the context of NCSs, this calls for novel sampling, control and resource management architectures that incorporate the wide range of opportunities provided by the network infrastructure, such as computational capability, adaptive service allocation, virtual programmability, adjustable channel reliability and latency, to maximize quality-of-control (QoC), while minimizing the cost of network usage. Emerging NCS applications, such as networked cyber-physical systems (Net-CPS), Internet of things (IoT), autonomous driving and Industry 4.0, often involve a large number of networked entities, each with time-varying requirements to fulfill specific tasks. The concept of ``network'' in such systems has gone beyond a simple shared communication channel to a general representation of evolving inter-layer dependencies (physical, information, and communication layers) \cite{Baras2014ISCCSP}. This creates a large potential to develop novel interactive approaches for real-time distributed sampling, networking and control in a cross-layer fashion, such that the individual entities become aware of networking architecture and opportunities, and coupling constraints and incorporate them in decision making, while the network is also aware of the demands and the task criticality of the entities and optimally allocate services and adjust the inter-dependencies.
 
\subsection{Contributions}
In this article, we propose jointly optimal communication and control policies for a general NCS model consisting of multiple delay-sensitive heterogeneous stochastic control systems closing their sensor-to-controller loops via a shared communication network, under various inter-layer awareness assumptions. Each sub-system is controlled by two local decision makers: a delay-sensitive controller that determines how fast state information should be sent to the plant controller, and a plant controller that maximizes control performance, measured by a linear-quadratic-Gaussian (LQG) cost. Local controllers have access to partial information of their own loop and may have some knowledge of the network parameters but do not have any knowledge about the dynamics and objectives of other sub-systems. The communication network offers various transmission services, for fixed prices, through multiple capacity-limited channels each with a distinct and deterministic latency. Transmission requests from sub-systems are arbitrated by a resource manager to avoid exceeding the link capacities. Resource arbitration is optimally performed such that the average sum of local (sub-system) LQG cost functions undergoes the minimum deviation compared to the resource-unlimited case, over a finite time horizon. We study scenarios each entailing a specific class of inter-layer awareness (one-directional and bi-directional awareness of time-varying and constant parameters) among the three decision makers, and derive the resulting jointly optimal policies. We show that performance of the joint design is associated with the level of delay-sensitivity tolerances and the awareness structure. In general, higher awareness results in lower local and social costs, though the resulting optimization problem becomes more computationally complex.
We also observe that the extent of performance improvement is firmly tied to the particular model of awareness, that is, for specific scenarios the improvements are slight compared to the extra solution complexity, while for others, the improvements are considerable. Interestingly, stricter delay sensitivity (i.e., local sub-systems tolerate minor deviations from their delay requirements) may result in lower local cost for some specific sub-systems, but higher social cost. %On the other hand, having less sensitivity w.r.t. the changes in local delay requirements results in a more flexible and less complex resource arbitration policy that may lead to higher local costs for some sub-systems (due to higher deviations from the requested delay periods), however, enhances the overall NCS performance.
 Our major contributions in this article are:
\begin{enumerate}
\item introducing a general model of NCS including heterogeneous control loops and variety of network services, with evolving interactions between control and network layers leading to enhanced joint performance.
\item investigating various awareness models for control and network layers and studying the interaction effects on the structure and performance of the optimal co-design.
\item deriving jointly optimal policies from awareness-based social optimization problems including performance-complexity comparisons w.r.t. the awareness model.
\end{enumerate}

We addressed a similar problem for a \textit{single-loop} NCS in \cite{8405590}, however, the present problem is far more general. The setup in \cite{8405590} does not include resource management as no contention exists, and interactions between control and network layers, in the previous formulation, reduce to one-directional knowledge of the network service prices.

\subsection{Related works}
The problem of joint control and communication design in NCSs has been an active research topic for the last two decades in both control and communication communities \cite{Baillieul2007IEEEProc, Shakkottai2003Comm}. Two rather distinct perspectives in addressing it have evolved: from the communication perspective where maximizing quality-of-service (QoS) is the major objective, requirements of control systems are often abstracted in the form of transmission rate, delay, and packet loss, with less attention given to the application dynamics and their real-time necessities \cite{Bai2012ICCPS,s8021099}. Numerous design methodologies are proposed including protocols for QoS-enhacing medium access control (MAC) \cite{YIGITEL20111982,Rajandekar2015ITJ,Bi2013MNA}, resource allocation  \cite{Letaief2006IWC,1561930}, link scheduling and routing \cite{4399978,7479131}, and queuing management \cite{6614116,6115208}. On the other hand, from the control perspective  the aim is to maximize QoC, and the communication network is usually seen as one or more maximum-rate and delay-negligible single-hop channels with some resource management capabilities to resolve contention. Many design approaches for sampling, estimation and control over shared networks are proposed to enhance QoC while reducing the rate of transmission, including event-triggered schemes \cite{FORNI2014490,5510124,SEYBOTH2015392,maity2019optimal}, self-triggered schemes \cite{6425820,7348666}, and adaptive/predictive data transmission and control models \cite{7423697,7798733,6882817}. For more sophisticated models of communication networks with data loss, delay and resource constraints, attempts have been made mostly on co-design architectures that guarantee stability rather than optimality\cite{5409530,7039815,8039513}. Altogether, the efforts have often led to design frameworks that either consider no evolving cross-layer coupling or presume interactions in average form over time, with performance guarantees mostly valid in the asymptotic regime.

New NCS applications, however, include multitude of heterogeneous systems that need to fulfill various real-time tasks while the network is responsible for coordinating the required type of communication and computation services \textit{per-time}. This urges the development of cross-layer architectures that consider active interactions between distributed components of control and communication layers to be aware of each others conditions, capabilities, requirements, and limitations to achieve joint optimal quality-of-control-and-service, not only asymptotically but also over finite time horizons. To achieve this, a main issue to address is optimal timeliness, i.e., when is the best time to make a specific action such as sampling, transmission or actuation. This problem is addressed in the control community mainly for data sampling over single-service communication support leading to optimal event-based technique to restrict unnecessary transmission \cite{7497849,6228518,Molin2014}, and prioritized MAC protocols to distribute resources based on urgency \cite{7963084,MAMDUHI2017209}. These approaches consider some measured or observed quantity of the control system, such as estimation error, as the triggering function. For multiple-loop non-scalar NCS, though, finding the optimal triggering law without major simplifications of the network layer is challenging. Moreover, resource allocation is often performed randomly or based on apriori given parameters but not based on dynamic awareness of interacting layers. In addition, the resulting performances of the proposed approaches are often addressed asymptotically over infinite horizon. 
To the best of our knowledge, a systematic approach that proposes a cross-layer optimal design of control, sampling and resource management strategies to maximize QoC for multi-loop NCSs with a shared network of various service opportunities is not presented in the literature.

%\subsection{Outline}
%In the remainder of this article, the NCS model, including control and communication layers, is described in Section~\ref{prob_model}. Problem formulation is presented in Section \ref{sec:prob-state}. Characterization of optimal co-design under different awareness models are discussed in Section \ref{sec:optimal-co-design}. Numerical results are demonstrated in Section \ref{num_res} and the article is concluded in Section~\ref{conclusion}.

%\noindent %\textbf{Notations:} 
\subsection{Notations}
We denote expectation, conditional expectation, transpose, floor and trace operators by $\E[\cdot]$, $\E[\cdot|\cdot]$, $[\cdot]^\top$, $\lfloor\cdot\rfloor$ and $\Tr(\cdot)$, respectively. For $a\geq 0$, define the indicator $\mathbbm{1}(a)\!=\!0$ if $a\!=\!0$,  and $\mathbbm{1}(a)\!=\!1$ if $a\!>\!0$. $X \!\sim \!\mathcal{N}(\mu, W)$ represents a multivariate Gaussian distributed random vector~$X$ with mean vector~$\mu$ and covariance $W \!\succ \!0$, where $A\!\succ \!B$ denotes $A\!-\!B$ is positive definite. The $Q$-weighted squared 2-norm of a column vector $X$ is denoted by $\|X\|^2_{Q}\!\triangleq\!X^\top Q X$. A time-varying column vector $X_t^i$ includes an array of variables belonging to sub-system $i$ at time~$t$, while we define %$X_t\triangleq[X_t^1,...,X_t^N]^\top$, 
$X_{[t_1, t_2]}^i\!\triangleq\!\{X_{t_1}^i,X_{t_1+1}^i,...,X_{t_2-1}^i,X_{t_2}^i\}$, and 
$X^i\!\triangleq\!\{X_{0}^i,X_{1}^i,...\;\}$. %For constant matrices, a subscript indicates the corresponding sub-system, and a superscript (if exists) denotes matrix power. \alert{}

\section{NCS Model: Control \& Communication Layers}\label{prob_model}
\begin{figure}[t]
\centering
\psfrag{a}[c][c]{\tiny \text{LTI plant}}
\psfrag{b}[c][c]{\tiny \text{plant controller}}
\psfrag{e}[c][c]{\tiny \text{delay controller}}
\psfrag{t}[c][c]{\scriptsize \text{sub-system} $1$}
\psfrag{tt}[c][c]{\scriptsize \text{sub-system} $N$}
\psfrag{d}[c][c]{\tiny $u_{k}^1$}
\psfrag{dd}[c][c]{\tiny $u_{k}^N$}
\psfrag{g}[c][c]{\tiny $\theta_{k}^1$}
\psfrag{h}[c][c]{\tiny $\theta_{k}^N$}
\psfrag{gg}[c][c]{\tiny $\vartheta_{k}$}
\psfrag{x}[c][c]{\tiny $x_{k}^1$}
\psfrag{xx}[c][c]{\tiny $x_{k}^N$}
\psfrag{f}[c][c]{\scriptsize \text{centralized network manager}}
\psfrag{sss}[c][c]{\scriptsize \text{Ack signal}}
\psfrag{m}[c][c]{\tiny $\lambda_0$}
\psfrag{mm}[c][c]{\tiny $\lambda_1$}
\psfrag{n}[c][c]{\tiny $\lambda_2$}
\psfrag{nn}[c][c]{\scriptsize $\ldots$}
\psfrag{s}[c][c]{\tiny $\vartheta_{k}^1$}
\psfrag{ss}[c][c]{\tiny $\vartheta_{k}^N$}
\psfrag{ddd}[c][c]{ $\ldots$}
\psfrag{o}[c][c]{\tiny $\lambda_D$}
\psfrag{p}[c][c]{\tiny $Z^{-2}(x_{k}^N)$}
\psfrag{q}[c][c]{\tiny $Z^{-1}(x_{k}^1)$}
\psfrag{pp}[c][c]{\tiny $\vartheta_{k}^N, x_{k}^N$}
\psfrag{qq}[c][c]{\tiny $\vartheta_{k}^1, x_{k}^1$}
\includegraphics[width=8.6cm, height=5.6cm]{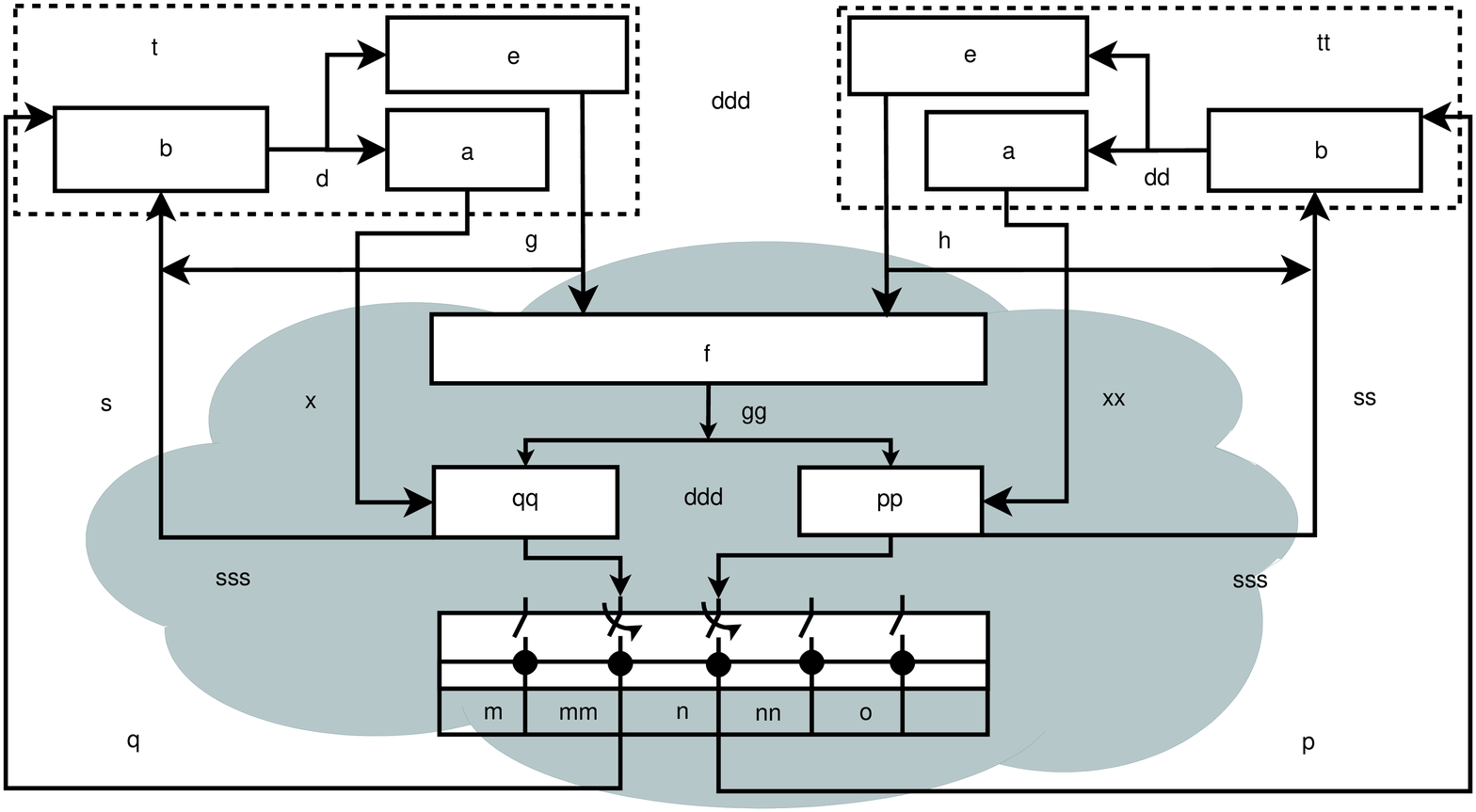}
\caption{Multiple LTI control loops exchange information with their respective controllers over a shared resource-limited communication network that can offer an array of latency-varying transmission services for different prices. ($Z^{-d}$ is the delay operator).}
\label{fig:sys-model}
\vspace{-7mm}
\end{figure}

We consider an NCS consisting of $N$ synchronous stochastic linear time-invariant (LTI) controlled processes exchanging information over a common resource-limited communication network with resource management capabilities (see Fig. \ref{fig:sys-model}). Each process $i\!\in \mathrm{N}\!\triangleq\!\{1,\ldots,N\}$ comprises of a physical plant $\mathcal{P}_i$, a delay-sensitivity controller $\mathcal{S}_i$, and a feedback control unit consisting of a state feedback controller $\mathcal{C}_i$ and an estimator $\mathcal{E}_i$. The dynamics of the plant $\mathcal{P}_i$, $i\in\mathrm{N}$, is described by the following stochastic difference equation:\vspace{-1mm}
\begin{equation}
x^i_{k+1}=A_ix^i_k+B_iu^i_k+w^i_k,
\label{eq:sys_model}\vspace{-1mm}
\end{equation}
where $x^i_k\!\in \!\mathbb{R}^{n^i}$ represents sub-system $i$'s state vector at time-step $k\!\in \!\mathbb{N}\cup \{0\}$, $u^i_k \!\in \!\mathbb{R}^{m^i}\!$ denotes the corresponding control signal, $w^i_k\!\in \!\mathbb{R}^{n^i}$ the stochastic exogenous disturbance, and $A_i\!\in \!\mathbb{R}^{n^i\times n^i}$ and $B_i\!\in \!\mathbb{R}^{n^i\times m^i}$ describe the system and input matrices, respectively. %, and the pair $(A_i,B_i)$ is assumed to be controllable \alert{}. 
To allow for heterogeneity, $A_i$ and $B_i$ matrices can be different across the NCS, i.e., $A_i\neq A_j$ and $B_i\!\neq\! B_j$, $i,j\!\in \!\mathrm{N}$. The disturbances are assumed to be random sequences with independent and identically distributed (i.i.d.) realizations $w^i_k\!\sim \!\mathcal{N}(0,\Sigma_w^i)$, $\forall k$ and $i\!\in\!\mathrm{N}$, and $\Sigma_w^i\!\succ\!0$. The initial states~$x^i_0$'s are also presumed to be randomly selected from any arbitrary finite-moment distributions with variance $\Sigma_{x_0}^i$. For simplicity, we assume that the sensor measurements are perfectly noiseless copies of the state values\footnote{The results of this article extend, with lengthy but straightforward mathematical efforts, to noisy measurements if noise is an i.i.d. process.}.

\subsection{Communication system model}

To support the information exchange between each plant and its corresponding control unit, a resource-limited communication network exists that provides cost-prone latency-varying transmission services. %Every service with a known latency is assigned with a specific cost to be paid by the sub-system, to which, that specific service has been given. 
More precisely, the communication network consists of a set of multiple distinct one-hop transmission links, represented by $\mathcal{L}\triangleq\{\ell_0,\ell_1,\ldots, \ell_D\}$, where $\ell_d$ denotes the transmission link with deterministic service latency of $d$ time-steps, and $|\mathcal{L}|\!=\!D\!+\!1$. Define the set $\mathcal{D}\!\triangleq\!\{0,1,\ldots,D\}$ and the vector $\Delta\!\triangleq\![0,1,\ldots,D]^\top$. Hence, if $x_k^i$ is sent to the controller $\mathcal{C}_i$ at time-step $k$ through the transmission link $\ell_d$ with $d$-step delay, $d\in\mathcal{D}$, then $x_k^i$ will be delivered to the controller at time-step $k+d$. Each transmission link $\ell_d\in \mathcal{L}$ is assigned a finite-valued service price $\lambda_d \in \mathbb{R}_{\geq 0}$ that is paid by the service recipient. Let $\Lambda\triangleq [\lambda_0,\lambda_1,\ldots,\lambda_D]^\top$  denote the prices assigned to the links in the transmission link set $\mathcal{L}$. The service prices are assigned such that shorter transmission delay induces higher price, i.e.,
$\lambda_0>\lambda_1 >\ldots>\lambda_D\geq 0$.

Denote $c_d\!\in \!\mathbb{N}$ as the transport capacity of a certain link $\ell_d\in \mathcal{L}$, which entails the link $\ell_d$ can transport at most $c_d$ number of data packets belonging to $c_d$ distinct sub-systems, simultaneously. The resource constraint can then be stated as
\begin{equation}\label{link-capacity-constraint}
c_d<N, \quad\forall \; d\in\mathcal{D}.
\end{equation}
Although, not all sub-systems can transmit through one certain link, we assume that the total capacity of all distinct transmission links is sufficient to service all sub-systems, via multiple transmission links, at every time-step $k\in\{0,1,\ldots\}$, i.e.,
\begin{equation}\label{tot-transmission}
\sum\nolimits_{d\in\mathcal{D}} c_d\geq N.
\end{equation}

\subsection{Distributed policy-makers \& decision variables}
We now introduce the policy makers and their corresponding decision outcomes for the underlying NCS, schematically depicted in Fig.~\ref{fig:sys-model}. The structural properties of the optimal policies will be thoroughly discussed in the next section.

\subsubsection{Delay-sensitivity} At the beginning of each sample cycle $k$ a local controller called ``\textit{delay controller}'' decides on delay-sensitivity of its corresponding sub-system by selecting one of the transmission links $\ell_d\!\in \!\mathcal{L}$. 
% given the service price $\Lambda$ and some local information\footnote{In later sections we comprehensively introduce the information structure of each decision making unit.}, 
We define the binary-valued vector $\theta^i_k\triangleq [\theta^i_k(0),\ldots,\theta^i_k(D)]^{\textsf{T}}$ as the delay controller's decision variable of sub-system $i$ at time-step $k$, where each element of $\theta^i_k$ is determined as follows:
\begin{equation}\label{eq:delay-selector-var}
\theta^i_k(d)=\begin{cases} 1, &\!\!\!\!\!\!\!\!\!\text{link $\ell_d$ is selected to transmit $x_k^i$ at time $k$,} \\ 0, \qquad & \!\!\!\!\!\!\!\!\! \text{link $\ell_d$ is not selected.} \end{cases}
\end{equation}
We assume that each local delay controller selects only one of the transmission links per time-step, therefore, we have
\begin{equation}\label{const1}
\sum\nolimits_{d=0}^D\theta^i_k(d)=1, \quad \forall \; k\in\{0,1,\ldots\},\; \forall \; i\in\mathrm{N}.
\end{equation}

\subsubsection{Control input} The control unit of each local sub-system includes a feedback controller $\mathcal{C}_i$ and an estimator~$\mathcal{E}_i$, which are assumed collocated. At every time $k$, the control command $u_k^i\!\in\!\mathbb{R}^{m^i}$ is the outcome of a causal and measurable law %\footnote{We derive the expression for optimal control law $\gamma_k^i$ in the next section.}
 $\gamma_k^i(\cdot)$, given the available information at $\mathcal{C}_i$. In the absence of the state information $x_k^i$, the collocated estimator~$\mathcal{E}_i$ may calculate the state estimate $\hat{x}_k^i$ if it is required for the computation of $u_k^i$. %The accuracy of the state estimation depends on the history of state information available at the control unit which itself is determined by the network serviceability and resource allocation mechanism.

%\begin{color}{blue}{(Vahid: here we need to discuss about this assumption. What I am interested in is adding the opportunity for the sub-systems to remain open loop at no cost (infinite $D$) and transmit without delay ($d=0$). We need to check the implications.)}\end{color}

%The described system scenario is schematically depicted in Fig.\ref{fig:sysmodel}. 
%We denote the binary-valued set of all transmission decision variables of sub-system $i$ up to, and including, time-step $k$ as follows:
%\begin{equation}
%\Theta^i_{[0,k]}\triangleq\cup_{t=0}^{k}\theta^i_t.
%\end{equation}

\subsubsection{Resource allocation}
The constraint (\ref{link-capacity-constraint}) implies that if the number of requests to utilize a specific transmission link~$\ell_d$ exceeds the capacity $c_d$, not all requests can be accordingly serviced. %For example, let $\theta_k^i(d)\!=\!1$ for sub-system $i$ at time-step $k$, which implies, from the delay controller's perspective, that $x^i_k$ will be received by the controller $\mathcal{C}_i$ at time $k\!+\!d$. However, $x^i_k$ will be certainly delivered to $\mathcal{C}_i$ at time $k+d$ only if $\sum_{i=1}^N \theta^i_k(d)\!\leq \!c_d$. Otherwise, the delivery time of $x^i_k$ is determined by the resource manager. It should be noted that, even if $\sum_{i=1}^N \theta^i_k(d)\!> \!c_d$, we cannot certainly say that $x_k^i$ will not be delivered at time $k+d$, because still $c_d$ number of transmission requests are serviced with link $\ell_d$. However, which sub-systems are among those $c_d$ sub-systems will be determined by the resource allocation policy.
Assume that a centralized network manager coordinates the resource allocation among sub-systems. In case $\sum_{i=1}^N \theta^i_k(d)\!> \!c_d$ for a certain link $\ell_d$, it decides which sub-systems will be serviced via the link $\ell_d$ and which~ones are reassigned to new transmission links. According to~(\ref{tot-transmission}), no scheduled data packet is dropped due to capacity limitation, as there will be another transmission link with free capacity to be assigned. We define the binary-valued vector $\vartheta_k^i\!\triangleq\! [\vartheta_k^i(0),\ldots,\vartheta_k^i(D)]^{\top}$ as the decision outcome of the centralized resource allocation mechanism that determines implementable transmission links for sub-system $i$. The element $\vartheta_{k}^i(d)\!\in\!\{0,1\}$ is similarly defined as in (\ref{eq:delay-selector-var}), except that it is determined by the network manager after receiving the requests from all the sub-systems. If at a time $k$, $\sum\nolimits_{i=1}^N \theta_{k}^i(d)\!\leq \!c_d, \forall d\!\in\!\mathcal{D}$, then $\vartheta_{k}^i \!=\!\theta_{k}^i, \forall i\!\in\!\mathrm{N}$. Otherwise, if $m$ requests are received for a certain link $\ell_d$ such that $m=\sum\nolimits_{i=1}^N \theta_{k}^i(d)> c_d$, new transmission links will be assigned to $m-c_d$ of those requests. %\footnote{In the next section, we discuss the mechanism based on which the network manager allocates the resources among sub-systems.}.  
This means for every sub-system $j$ of those~$c_d$ sub-systems, $\vartheta_{k}^j \!=\!\theta_{k}^j$ holds, while for every sub-system $\bar{j}$ belonging to the remaining set of $m- c_d$ sub-systems, $\vartheta_{k}^{\bar{j}} \!\neq\!d\in \mathcal{D} \setminus \{\tilde d,\bar d\}\theta_{k}^{\bar{j}}$. 
Element-wise, if a sub-system $\bar{j}$ requested a certain link $\ell_{\bar{d}}$, but instead was serviced with a different link $\ell_{\tilde{d}}$, then $\vartheta_{k}^{\bar{j}}(\tilde{d}) \!\neq\!\theta_{k}^{\bar{j}}(\tilde{d})$ and $\vartheta_{k}^{\bar{j}}(\bar{d})\neq\theta_{k}^{\bar{j}}(\bar{d})$, while for the rest of the elements, we have $\vartheta_{k}^{\bar{j}}(d) =\theta_{k}^{\bar{j}}(d)$, $\forall d\in \mathcal{D} \setminus\{\tilde{d},\bar{d}\}$. 
%\begin{equation}\label{eq:per-link-NMdecision}
%\vartheta_{k}^i(d) =\begin{cases}\vspace{2mm} \theta_{k}^i(d), &\;\;\text{if}\; \;\sum\nolimits_{j=1}^N \theta_{k}^j(d)\leq c_d, \\ \hat{\theta}_{k}^i(d), &\;\;\text{if}\; \; \sum\nolimits_{j=1}^N \theta_{k}^j(d)> c_d, \end{cases}
%\end{equation}
%where, $\hat{\theta}_{k}^i(d)\in \{0,1\}$ denotes the network manager's decision, when $\sum\nolimits_{j=1}^N \theta_{k}^j(d)> c_d$, to whether assign the requested transmission link $\ell_d$ to sub-system $i$ or to assign it a new transmission link.

%we can express the network manager's decision at every time-step $k\in\{0,1,\ldots\}$ as follows:
%\begin{equation}\label{eq:NMdecision}
%\vartheta_{k}^i =\begin{cases}\vspace{2mm} \theta_{k}^i, &\;\text{if}\; \forall d\in \{0,\ldots,D\}:\; \sum\nolimits_{j=1}^N \theta_{k}^j(d)\leq c_d, \\ \hat{\theta}_{k}^i, &\;\text{if}\; \exists d\in \{0,\ldots,D\}:\; \sum\nolimits_{j=1}^N \theta_{k}^j(d)> c_d, \end{cases}
%\end{equation}
%where, $\hat{\theta}_k^i\triangleq [\hat{\theta}_k^i(0),\ldots,\hat{\theta}_k^i(D)]^\top$ denotes the ultimate transmission link assignment to sub-system $i$ at time-step $k$. 
%As $\vartheta^i$ determines the ultimate transmission link assignment to sub-system $i$, the time that a certain state information is received by its controller may not be the time that the corresponding delay controller had requested. 
Since the ultimate link assignment is made by the network manager, state information received at the controller at time $k$, denoted by $\mathcal{Y}^i_k$, is determined by $\vartheta^i$. Define $y_{k-d}^i(d)\!=\!x_{k-d}^i$ if $\vartheta^i_{k-d}(d)\!=\!1$, and $y_{k-d}^i(d)\!=\!\emptyset$ if $\vartheta^i_{k-d}(d)\!=\!0$, then
\begin{equation}\label{set:received-state}
\mathcal{Y}^i_k=\{y_{k}^i(0),y_{k-1}^i(1),\ldots, y_{k-D}^i(D)\},
\end{equation}
where, to avoid notational inconvenience, we define $\vartheta_{-1}^i(d)\!=\!\vartheta_{-2}^i(d)\!=\!\ldots\!=\!\vartheta_{-D}^i(d)\!=\!0$ for all $d\in \mathcal{D}$. 

Out of order delivery is a common phenomenon that may happen depending on the selected resource allocation policy. Assume state $x^i_0$ is sent with delay 5 and $x^i_1$ is sent with zero delay, then $x^i_1$ will arrive before $x^i_0$.
However, out of delivery packet arrival will be adequately handled while constructing the state estimate and computing the control. 
If a stale state measurement arrives at the controller while a fresher one is available, the optimal controller uses only the \textit{freshest} one in constructing the optimal control input. 
Hence, without delving into the details, one can intuitively confirm that the optimal delay link profile should impose the least communication cost\footnote{Due to the constraint \eqref{const1} each sub-system is forced to pay a communication cost of at least $\lambda_D$ per time-step.} for \textit{outdated measurements}. This will naturally emerge as the solution of the optimization problems described later. %We will show that in the absence of resource limitations, the optimal transmission profile of each sub-system prevents receiving outdated state information at the controller.

%%%%%%%%%%%%%%%%%%%%%%%%%%%%%%%%%%%%%%%%%%%%%%%%%%%%%%%%%%%%%%%%%%%%%%
\section{Problem Formulation: Joint Optimization}\label{sec:prob-state}

In this section, we formulate a cross-layer joint optimization problem and discuss its structural characteristics w.r.t. to the policy makers. 
%To do this, we first need to introduce the interaction model of the underlying networked system. We investigate two scenarios each representing a different awareness model across the decision makers, and derive the optimal joint policies that minimize a social cost function. We discuss that changes in awareness result in different optimal policies and consequently in optimal performance. In fact, higher cross-layer awareness generally leads to computing more precise policies but at the expense of higher computational complexity. %In the following, we classify the optimal policies for the plant and delay controllers as well as the resource manager based on different awareness models and then derive the optimal laws within each class of policies.
The three decision makers are 1) local plant controllers that computes the control input $u_k^i,\;i\!\in\!\mathrm{N}$, at time-step $k$, 2) local delay controllers where the decision outcome $\theta_k^i$ determines the link~$\ell_d\!\in \!\mathcal{L}$ through which~$x_k^i$ will be transmitted, and 3) resource manager to compute $\vartheta_k^i$ that determines whether $\theta_k^i$ can be accordingly serviced. 

We assume that individual control systems have no knowledge of each other's parameters or decision variables. Let $\mathcal{I}_k^i$, $\mathcal{\bar{I}}_k^{i}$, and $\mathcal{\tilde{I}}_k$ denote the sets of accessible information for the plant controller, delay controller, and resource manager, respectively. (These sets are characterized in Section \ref{sec:optimal-co-design} where the information structure at each policy maker is discussed.). Then, at every time $k$, the plant control, delay control, and resource allocation policies are measurable functions of the $\sigma$-algebras generated by their corresponding information sets, i.e., $u_k^i \!=\! \gamma_k^i (\mathcal{I}_k^{i})$, $\theta_k^i \!=\! \xi_k^i (\mathcal{\bar{I}}_k^i)$, and $\vartheta_k\!=\!\pi_k(\mathcal{\tilde{I}}_k)$. Note that, $\gamma^i$ and $\xi^i$ represent local policies corresponding to a specific sub-system $i$, while $\pi$ is computed centrally and includes the resource allocation profile for all $i\!\in\!\mathrm{N}$. The local objective function of each sub-system $i\!\in\!\mathrm{N}$, denoted by~$J^i$, consists of its own LQG part plus the communication cost in average form over the finite horizon $[0,T]$, as follows:
\begin{align}\label{eq:local_objective}
\!\!\!J^i(u^i,\theta^i)\!=\!\E\!\Big[\|x_T^{i}\|_{Q_2^i}^2\!+\!\sum\nolimits_{k=0}^{T-1} \!\|x_k^{i}\|_{Q_1^i}^2\!+\!\|u_k^{i}\|_{R^i}^2\!+\!\theta_k^{i^\top}\!\!\Lambda\Big]\!
\end{align}
where, $Q_1^i\!\succeq \!0$, $Q_2^i\!\succeq \!0$, and $R^i\!\succ \!0$ represent constant weight matrices for the state and control inputs, respectively. %We, moreover, assume each pair $(A_i,Q_1^{i^\frac{1}{2}})$ to be detectable, $\forall i\in \mathrm{N}$ \alert{}. 
%Note that, $J^i$ may not be achievable due to having no information about the decision variables of the other sub-systems $j\neq i$.

The overall objective for the underlying NCS is to maximize the average performance of all sub-systems under the resource constraint (\ref{link-capacity-constraint}). This cannot simply be obtained by taking the average of the sum of the local cost functions (\ref{eq:local_objective}) because the local decision variable $\theta_k^i$ might not be realized due to the resource limitations. More precisely, the time that a state information is received at a controller might not always be the time decided by its delay controller. In fact, the cost function (\ref{eq:local_objective}) is achievable for a certain sub-system $i$ only if $\vartheta_k^i=\theta_k^i$, $\forall k\!\in\![0,T]$. However, if the capacity of one or more transmission links are exceeded by the number of requests, the resource manager adjusts some of those requests, which eventually changes the realization of the control signal $u_k^i$ and consequently the value of the local cost $J^i(u^i,\theta^i)$.

We formulate the system (commonly called social) cost $J$ as the average difference between the sum of $J^i$'s from the resource manager (given $\vartheta_k^i$'s) and local sub-systems' (given $\theta_k^i$'s) perspectives, i.e., knowing $\vartheta_k\!=\!\pi_k(\tilde{\mathcal{I}}_k\!)$, we have
\begin{equation}
J=\frac{1}{N}\sum\nolimits_{i=1}^N \E\!\Big[J^i(u^i,\vartheta^i)-\min_{u^i,\theta^i}J^i(u^i,\theta^i)\Big],
\label{eq:global_OP}
\end{equation}
and $J^i\!$ has been adjusted after resource allocation as 
\begin{align}\label{eq:local_objective_var}
\!\!\!J^i(u^i,&\vartheta^i)\!=\!\E\!\Big[\|x_T^{i}\|_{Q_2^i}^2\!+\!\sum\nolimits_{k=0}^{T-1} \!\|x_k^{i}\|_{Q_1^i}^2\!+\!\|u_k^{i}\|_{R^i}^2\!+\!\vartheta_k^{i^\top}\!\!\Lambda\Big]\!
\end{align} 
Note that, $J^i(u^i,\theta^i)$ is computed locally independent of the decisions for sub-systems $j\!\neq \!i$, while $J^i(u^i,\vartheta^i)$ is computed after central resource allocation is performed. %Hence, $\vartheta^i$ does not explicitly affect the value of the former cost function, though it has implicit effect through the control signal $u^i$, and similarly $\theta^i$ does not directly influence the value of the latter cost, despite having an implicit effect through $\vartheta^i$. 
The resources are allocated such that, w.r.t. the sub-systems preferences, the closest possible services are provided and $J$ is minimized. 

In addition to the delay controllers that determine the \textit{per-time} sensitivity of the control loops w.r.t. transmission latency, we introduce a constant latency-tolerance bound for each sub-system such that the resource manager allocates a transmission link only within that given bound. To diversify this static sensitivity for each sub-system, we define $\alpha_i$ and $\beta_i$ ($\in \!\mathcal{D}$) representing the maximum allowable delay tolerances. This specifies that a sub-system $i$ can tolerate imposed deviations by the network manager from the selected link $\ell_d$ only within the set $\{d-\alpha_i,\ldots,d,\ldots,d+\beta_i\}$\footnote{To avoid notational inconvenience, the network manager only takes into account the feasible tolerances of this set that also belong to $\mathcal{D}$. Moreover, for a nontrivial set, we assume at least one non-zero $\alpha_i$ and $\beta_j$, $i,j\in\mathrm{N}$.}. The ultimate goal is then finding the optimal policies $\gamma_k^{i,\ast} (\mathcal{I}_k^{i})$, $\xi_k^{i,\ast} (\mathcal{\bar{I}}_k^{i})$ and $\pi_k^\ast (\mathcal{\tilde{I}}_k)$ that jointly minimize the social cost $J$:
\begin{subequations}
\begin{align}\label{prob:global_OP}
&\!\!\min_{\gamma^i,\xi^i,\pi} J\\
\text{\small s. t.}\;\;\;\; & u_k^i = \gamma_k^i (\mathcal{I}_k^{i}),\;\; \theta_k^i = \xi_k^i (\mathcal{\bar{I}}_k^i),\;\;\vartheta_k=\pi_k(\mathcal{\tilde{I}}_k),\\\label{sensit_constrains}
& \!-\alpha_i\leq(\vartheta_k^i-\theta_k^i)^\top \Delta\leq \beta_i, \; i\in \mathrm{N},\\\label{prob:global_OP_last_line}
& \!\!\;  \sum\nolimits_{j=1}^N \vartheta_{k}^j(d)\leq c_d, \;d\in \mathcal{D}, \; k\!\in\![0,T-1].
\end{align}
\end{subequations}
%\begin{align}\label{prob:global_OP}
%&\min_{\gamma^i,\xi^i,\pi} J\\\nonumber
%\text{\small such that}\;\; &\; u_k^i = \gamma_k^i (\mathcal{I}_k^{i}),\; \theta_k^i = \xi_k^i (\mathcal{\bar{I}}_k^i)\\\nonumber
%&\; \vartheta_k=\pi_k(\mathcal{\tilde{I}}_k), \;\text{if}\; \exists d: \sum\nolimits_{j=1}^N \theta_{k}^j(d)> c_d\\\nonumber
% &\; \vartheta_k= \theta_k, \;\text{if}\; \forall d:
% \sum\nolimits_{j=1}^N \theta_{k}^j(d)\leq c_d\\\nonumber
%& \;d\in \!\{0,\ldots,D\}, \; i\in \mathrm{N}, \; k\in[0,T-1].
%\end{align}
Constraint (\ref{sensit_constrains}) specifies that if at  time $k$, $\theta_k^i(d)\!=\!1$, then the network manager allocates an available resource only from the set of links $\{\ell_{\max\{0,d-\alpha_i\}}, \ldots,\ell_{\min\{d+\beta_i,D\}}\}$ to sub-system $i$. %Note that, given any $\alpha_i$ and $\beta_i$, only the feasible solutions within the set $\mathcal{L}$ is considered. In the given example, for instance, if $\theta_k^i(0)=1$, the link choices reduce to $\{\ell_0, \ell_1, \ell_2\}$, as there is no $\ell_{-1}$, of if $\theta_k^i(4)=1$, then the link choices for the resource manager would be $\{\ell_3, \ell_4, \ell_5\}$, as there is no $\ell_6$. 
The ultimate links from the allowable ones are selected by the resource manager such that the social cost $J$ is minimized. Note that problem (10) might not have a feasible solution for all $c_d$. We derive a sufficient feasibility condition in form of a lower bound for $c_d$ in the Section \ref{sec:optimal-co-design}.

Solving problem (10) is challenging due to the couplings between the decision variables. In fact, $\theta_k^i$ is the best choice, from the perspective of sub-system $i$, to make the balance between its LQG cost and communication price. However, delay controller decisions may go through changes because of resource limitations. Note that, the control input $u_k^i$ is explicitly affected by $\theta_k^i$ in the absence of the resource limitations, but if $\vartheta_k^i\!\neq \!\theta_k^i$, then $u_k^i$ will have a different~realization. This means the realization of $u_k^{i,\ast}$ computed~from the problem~(\ref{eq:local_objective}) might be different from that being computed from the problem~(\ref{eq:local_objective_var}) even if both are computed from the~same control law. Moreover, any decision of $\vartheta_k^i$ is clearly $\theta_k$-dependent. Further, as we discuss later, $\theta_{k+1}^i$ might also be a function of $\vartheta_{[0, k]}^i$. Altogether, problem (10) is nontrivial due to inter-dependencies and cross-layer constraints, hence we need to identify relevant conditions under which it can be decomposed. %To efficiently solve the joint problem (\ref{prob:global_OP}), we then confine our attention to some specific classes of policies within which we search for the optimal laws.

\section{Awareness Models \& Optimal Co-design}\label{sec:optimal-co-design}

Structural properties of the joint optimal policies are correlated with the cross-layer awareness model which characterizes the information sets $\mathcal{I}_k^i, \bar{\mathcal{I}}_k^i, \tilde{\mathcal{I}}_k$. We introduce different awareness models under which the couplings between $u_k^i$, $\theta_k^i$ and $\vartheta_k^i$ are examined. We discuss directed models of awareness for two different sets of information that can be exchanged between the decision makers of both layers: ``constant model parameters'' and ``dynamic variables''. In the rest of the article, awareness of the constant model parameters for the network layer, if assumed, entails the knowledge of $\{A_i, B_i, Q_1^i, Q_2^i, R^i, \Sigma_w^i,\Sigma_{x_0}^i\}, \forall i\!\in\!\mathrm{N}$. Note that, $\{\alpha_i,\beta_i\}$'s are known to the network layer. The local delay and plant controllers are also assumed to have the knowledge of their own model parameters $\{A_i, B_i, Q_1^i, Q_2^i, R^i, \Sigma_w^i,\Sigma_{x_0}^i,\alpha_i,\beta_i\}$ as well as the constant network parameters $\{\Lambda,\mathcal{L}\}$.

\begin{figure}[tb]
\centering
\psfrag{a}[c][c]{\scriptsize \text{Plant}}
\psfrag{aa}[c][c]{\scriptsize \text{controller}}
\psfrag{b}[c][c]{\scriptsize \text{Delay}}
\psfrag{bb}[c][c]{\scriptsize \text{controller}}
\psfrag{d}[c][c]{\scriptsize \text{Network}}
\psfrag{dd}[c][c]{\scriptsize \text{manager}}
\psfrag{ddd}[c][c]{\scriptsize \text{Application layer}}
\psfrag{c}[c][c]{\scriptsize \text{Network layer}}
\psfrag{g}[c][c]{\scriptsize $\vartheta_{k}^i$}
\psfrag{e}[c][c]{\scriptsize $u_{k}^i$}
\psfrag{f}[c][c]{\scriptsize $\theta_{k}^i$}
\psfrag{h}[c][c]{\tiny \text{Network model parameters}}
\psfrag{j}[c][c]{\tiny \text{System model parameters}}
\psfrag{t}[c][c]{\tiny \text{Here cycle $k$ begins}}
\includegraphics[width=6.5cm, height=4.5cm]{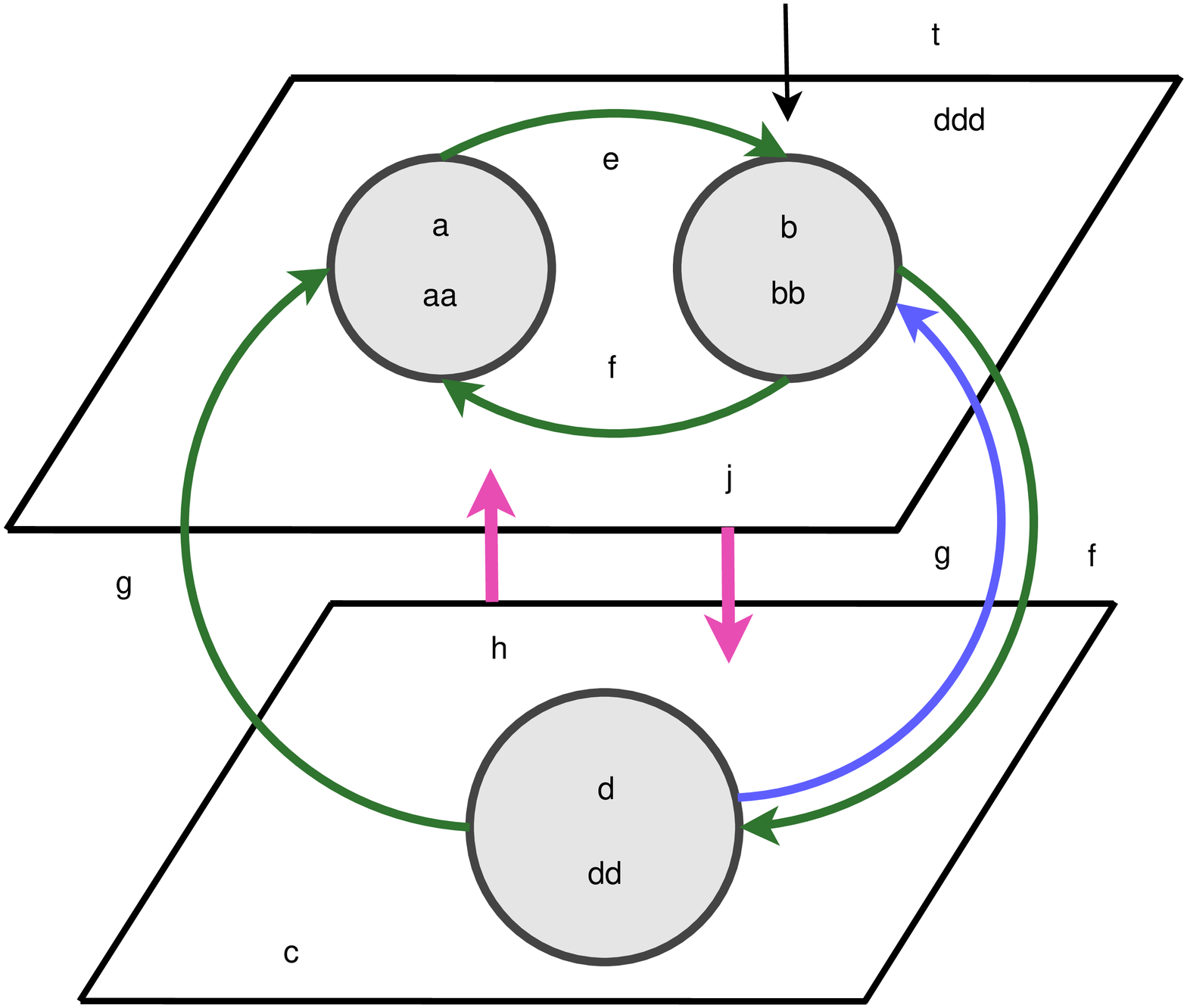}
\caption{Cross-layer interaction model: magenta arrows represent awareness of constant parameters. For network layer, awareness of system parameters, if assumed, includes $\{A_i, B_i, Q_1^i, Q_2^i, R^i, \Sigma_w^i,\Sigma_{x_0}^i,\alpha_i,\beta_i\}$, $\forall i\!\in\!\mathrm{N}$. For control loops, network parameters $\{\mathcal{L},\Lambda\}$ are known. If $\vartheta_k^i$ is available for the delay controller (violet arrow) we call the delay-control policy \textit{reactive}, otherwise, it is called \textit{impassive}.}
\label{fig:interaction_model1}
\vspace{-5mm}
\end{figure}

To discuss awareness of dynamic variables, it is essential to have a clear picture of the order of generating variables in one sample cycle, e.g., $k \!\rightarrow \!k\!+\!1$. At the beginning of a sample time $k$, the system state $x_k^i$ is updated according to the dynamics (\ref{eq:sys_model}), and then the delay controller generates $\theta_k^i$, based on the policy $\xi_k^i(\bar{\mathcal{I}}_k^i)$ to determine the transmission link through which $x_k^i$ is to be communicated. System state $x_k^i$ together with the service request $\theta_k^i$ is then forwarded to the network to be serviced. The resource manager receives this information from all sub-systems and checks whether the number of requests for each link is exceeding its capacity. It then computes $\vartheta_k^i$, according to the policy $\pi_k(\tilde{\mathcal{I}}_k)$, and $x_k^i$ is transmitted through the link determined by $\vartheta_k^i$. The control signal $u_k^i$ is computed from the control law $\gamma_k^i(\mathcal{I}_k^i)$\footnote{In case the information set $\mathcal{I}_k^i$ is not updated, i.e. if no new state information belonging to sub-system $i$ is scheduled to be delivered at time $k$, the control signal is updated based on a model-based estimation of $x_k^i$.}, $x_{k+1}^i$ is afterward updated and the pattern repeats over next samples. 

At the controllers, the following awareness model of the dynamic variables is valid throughout the article. Knowledge of the model parameters of sub-system $i$ is assumed for $\mathcal{C}_i$. Reminding (\ref{set:received-state}), the information set $\mathcal{I}_k^i$ at time $k$ is as
\begin{equation}\label{set:controller-information}
\mathcal{I}_k^i=\{\mathcal{Y}^i_0,... ,\mathcal{Y}^i_k,\theta_0^i,... ,\theta_k^i,\vartheta_0^i,... ,\vartheta_k^i,u_0^i,... ,u_{k-1}^i\}.
\end{equation}
As in Fig.~\ref{fig:interaction_model1}, the information set $\mathcal{I}_k^i$ in (\ref{set:controller-information}) specifies that the plant controllers are aware of the outcomes of the other two policies $\xi_{[0,k]}^i$ and $\pi_{[0,k]}^i$, from $t\!=\!0$ up to current time $t\!=\!k$. For that, we assume a dedicated low-bandwidth and error-free acknowledgement channel exists to inform the controllers at every time $k$ about $\theta_{k}^i$ and $\vartheta^i_{k}$ (see Fig. \ref{fig:sys-model}). 

To determine the awareness structure for the resource manager, we consider the following assumption: 

\textit{Assumption 1:} 
The resource allocation law $\pi_k$ is rendered independent of the local plant control policies $\gamma_{[0,k-1]}^i$, $i\!\in\!\mathrm{N}$.

Assumption 1 
%does not result in a noticeable generality loss yet leads to a substantial reduction in complexity of the optimal co-design. 
declares a one-directional dependence between the plant control and resource allocation policies (see Fig.~\ref{fig:interaction_model1}), i.e., $\gamma_k^i$'s are explicit functions of $\vartheta_k^i$, but $\pi_k$ does not incorporate $u_{[0,k-1]}^i$'s, $i\!\in\!\mathrm{N}$, in determining $\vartheta_k^i$. Although this results in the resource allocation being independent of local control laws, $\pi_k$ depends on $\theta_{[0,k]}^i$ which itself is effected by the control signals. 
In other words, the local delay controllers generate $\theta_k^i$'s such that an averaged equilibrium is achieved between maximizing the control performance and minimizing the communication cost. Since $\pi_k$ is an explicit function of $\theta_{[0,k]}^i$'s, the effect of optimizing control performance is indirectly considered in resource allocation. hence, the explicit dependence between the plant control and the resource manager policies that requires full knowledge of $u_{[0,k-1]}^i$'s, $i\!\in\!\mathrm{N}$ at the resource manager, is avoided. This assumption, nonetheless, leads to a considerable complexity reduction in computing the optimal policies $\pi_k^\ast$ and $\gamma_k^{i,\ast}$ (Section \ref{subsec:CE}).

Having Assumption 1, we introduce the dynamic variables included in the resource manager's information set $\tilde{\mathcal{I}}_k$, as\vspace{-1mm}
\begin{equation}\label{set:RM-information}
\tilde{\mathcal{I}}_k=\{\theta_0,\ldots,\theta_k,\vartheta_0,\ldots,\vartheta_{k-1}\}.
\end{equation}
%Later, we discuss more sophisticated information structures for the network manager and address the resulting optimal resource management policy.
We also discuss the resource allocation with (Sec. \ref{sec:optimal-delay-model-aware}) and without (Sec. \ref{sec:no_model_awareness}) knowledge of the control systems model parameters. For the purpose of comparison, we discuss the scenario that the network manager does not take into account the local delay sensitivities in computing $\vartheta_k^i$'s, i.e., it allocates resources among sub-systems knowing neither the constant $\{\alpha_i,\beta_i\}$'s nor $\theta^i_{[0,k]}$'s, $\forall i\in\mathrm{N}$ (see Sec. \ref{sec:weighted_cost}). This is an important observation which shows how the local and social cost functions change w.r.t. the individual delay sensitivities.

For delay controllers, we introduce two design approaches, so called \textit{impassive} and \textit{reactive} delay control policies, each representing a distinct model of awareness of the dynamic variables (Fig.~\ref{fig:interaction_model1}). We derive the resulting joint optimal delay control and resource allocation policies in Sections \ref{sec:optimal-delay-model-aware} and \ref{sec:no_model_awareness}. Before that, to determine the structure of the optimal plant control policy~$\gamma_k^{i,\ast}$, $i\!\in\!\mathrm{N}$, %The awareness model determines the information set available for the delay controllers to generate $\theta_k^i$, $\forall k$ and $i\in\mathrm{N}$. Changes in information structure results in different optimal policies for the joint optimization problem (\ref{prob:global_OP}), and eventually change the social cost values  of (\ref{eq:global_OP}). 
we need to introduce the maximum amount of information that can be available at the $i^{\textsf{th}}$ delay controller at a time~$k$\footnote{Later we discuss that (\ref{set:DC-information}) corresponds to the reactive delay control approach and introduce the information set for the impassive approach.}. The set $\bar{\mathcal{I}}_k^i$ contains, at most, information about the following dynamic variables:\vspace{-1mm}
\begin{equation}\label{set:DC-information}
\bar{\mathcal{I}}_k^i=\{\theta_0^i,... ,\theta_{k-1}^i,\vartheta_0^i,... ,\vartheta_{k-1}^i,u_0^i,... ,u_{k-1}^i\}.
\end{equation} \vspace{-8mm}

\subsection{Certainty equivalence and optimal plant controller}\label{subsec:CE}

Having the sets $\mathcal{I}_k^i$, $\tilde{\mathcal{I}}_k$ and $\bar{\mathcal{I}}_k^i$ introduced in (\ref{set:controller-information})-(\ref{set:DC-information}), and reminding Assumption 1, we state the following theorem:

\begin{theorem}\label{thm:CE}
%Consider the social optimization problem (\ref{prob:global_OP})-(\ref{prob:global_OP_last_line}) over the time horizon $[0,T]$.
Given $\mathcal{I}_k^i$, $\tilde{\mathcal{I}}_k$ and $\bar{\mathcal{I}}_k^i$ in (\ref{set:controller-information})-(\ref{set:DC-information}) and under the Assumption 1, the optimal plant control law $\gamma_k^{i,\ast}$, $i\!\in\!\mathrm{N}$, w.r.t. (10) is of certainty equivalence form with the control inputs computed from the following linear state feedback law
\begin{align}\label{eq:CE-law}
u_k^{i,\ast}&=\gamma_k^{i,\ast}(\mathcal{I}_k^i)=-L_k^{i,\ast} \E[x_k^i|\mathcal{I}_k^i], \quad i\in\mathrm{N},\\\label{eq:CE-gain}
L_k^{i,\ast}&=\left(R^i+B_i^\top P_{k+1}^i B_i\right)^{-1}B_i^\top P_{k+1}^i A_i,
\end{align}
where, $P_T^i\!=\!Q_2^i$ and $P_k^i$ solves the below Riccati equation
\begin{align*}
P_k^i&\!=\!Q_1^i\!+\!A_i^\top \!\left[\!P_{\!k+1}^i\!-\!P_{\!k+1}^iB_i \!\left(R^i\!\!+\!B_i^\top \!P_{\!k+1}^i B_i\right)^{\!-1}\!B_i^\top \!P_{\!k+1}^i\!\right]\!A_i.
\end{align*}
\end{theorem}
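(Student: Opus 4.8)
The plan is to prove Theorem~\ref{thm:CE} by a backward dynamic programming argument on the finite horizon $[0,T]$, treating the plant control law $\gamma_k^i$ as the only quantity being optimized at this stage: by Assumption~1 the resource allocation policy $\pi_k$ does not depend on the control inputs $u_{[0,k-1]}^i$, and by the given information structure the delay control policy $\xi_k^i$ is measurable w.r.t. $\bar{\mathcal{I}}_k^i$ in \eqref{set:DC-information}, which contains only past controls; hence for \emph{fixed} $\xi$ and $\pi$ the social cost $J$ decomposes across sub-systems, and minimizing $J$ over $\gamma^i$ is equivalent to minimizing $\E[J^i(u^i,\vartheta^i)]$ over $\gamma^i$ for each $i$ separately (the $\min_{u^i,\theta^i}J^i$ term in \eqref{eq:global_OP} does not involve the $\gamma^i$ being optimized, and the communication-cost term $\vartheta_k^{i^\top}\Lambda$ in \eqref{eq:local_objective_var} is $\gamma^i$-independent once $\theta$ and $\vartheta$ are fixed). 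So the remaining problem is a classical partially-observed LQG problem for system \eqref{eq:sys_model} with quadratic cost $\E[\|x_T^i\|_{Q_2^i}^2+\sum_{k=0}^{T-1}\|x_k^i\|_{Q_1^i}^2+\|u_k^i\|_{R^i}^2\mid\mathcal{I}^i]$.

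First I would set up the cost-to-go $V_k^i(\mathcal{I}_k^i)$ and posit the standard quadratic-plus-constant ansatz $V_k^i=\E[\|x_k^i\|_{P_k^i}^2\mid\mathcal{I}_k^i]+(\text{terms independent of }u)$, with terminal condition $P_T^i=Q_2^i$. Plugging \eqref{eq:sys_model} into the Bellman recursion and using that $w_k^i$ is zero-mean and independent of $\mathcal{I}_k^i$ and of $u_k^i$, the stage-$k$ minimization reduces to minimizing $\E[\|x_k^i\|_{Q_1^i}^2+\|u_k^i\|_{R^i}^2+\|A_ix_k^i+B_iu_k^i\|_{P_{k+1}^i}^2\mid\mathcal{I}_k^i]$ over the $\mathcal{I}_k^i$-measurable $u_k^i$. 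Completing the square in $u_k^i$ and taking the conditional expectation gives the minimizer $u_k^{i,\ast}=-L_k^{i,\ast}\E[x_k^i\mid\mathcal{I}_k^i]$ with $L_k^{i,\ast}$ as in \eqref{eq:CE-gain}, and back-substitution yields the Riccati recursion for $P_k^i$ exactly as stated, plus a residual term involving the conditional error covariance $\E[\|x_k^i-\E[x_k^i\mid\mathcal{I}_k^i]\|^2\mid\mathcal{I}_k^i]$ and the noise covariance $\Sigma_w^i$, which is $u$-independent and therefore does not affect the optimal gain. Induction on $k$ from $T$ down to $0$ closes the argument and establishes the certainty-equivalence form \eqref{eq:CE-law}.

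The step I expect to be the main obstacle is justifying the no-dual-effect property rigorously, i.e., that $\E[x_k^i\mid\mathcal{I}_k^i]$ and the conditional estimation error are unaffected by the choice of past controls $u_{[0,k-1]}^i$, and that the future information sets $\mathcal{I}_{k+1}^i,\dots$ (in particular which fresh measurements arrive via $\mathcal{Y}^i$) do not depend on $u_k^i$. The latter is where Assumption~1 and the information structure are essential: since $\pi$ ignores the controls and $\xi^i$ is a function of $\bar{\mathcal{I}}_k^i$, the realizations $\theta_{[0,k]}^i,\vartheta_{[0,k]}^i$ that govern the arrival pattern in \eqref{set:received-state} are not influenced by $u_k^i$ beyond the deterministic, already-known dependence, so the control genuinely has no dual effect and the separation between estimation and control holds. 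One must also note that $\mathcal{I}_k^i$ in \eqref{set:controller-information} contains $\theta_{[0,k]}^i$ and $\vartheta_{[0,k]}^i$ (via the acknowledgement channel), so $\E[x_k^i\mid\mathcal{I}_k^i]$ is computable by the collocated estimator $\mathcal{E}_i$ using a model-based predictor between arrivals and resetting to the freshest received state --- this handles the out-of-order delivery remark and shows $\E[x_k^i\mid\mathcal{I}_k^i]$ is well-defined and recursively propagable. Once these points are in place, the rest is the routine LQG completion-of-squares computation, which I would only sketch.
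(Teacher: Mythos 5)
Your proposal is correct and follows essentially the same route as the paper: both use Assumption~1 and the inclusion $\bar{\mathcal{I}}_k^i\subseteq\mathcal{I}_k^i$ to isolate $\E[V_0^i\,|\,\mathcal{I}_0^i]$ as the only $\gamma^i$-dependent term of the social cost, reducing the problem to a standard partially observed LQG problem whose certainty-equivalence solution yields \eqref{eq:CE-law}--\eqref{eq:CE-gain}. The only difference is one of emphasis: you sketch the backward completion-of-squares and the no-dual-effect justification explicitly, whereas the paper obtains the separation via nested conditional expectations (its Eq.~\eqref{eq:global_cost_separation}) and then outsources the standard LQG derivation to a citation.
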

\begin{proof}
See the Appendix \ref{Append:thm1}.
\end{proof}

\begin{remark}
In the absence of the constraint (\ref{link-capacity-constraint}), the resource allocation becomes redundant as $\vartheta_k^i\!=\!\theta_k^i$, $\forall i\!\in\!\mathrm{N}$ and $\forall k\!\in\![0,T]$. Hence, from (\ref{eq:global_cost_separation}), we have $\min_{\gamma^i,\xi^i,\pi} J=0$.%, which would be the best social performance of the co-design.
\end{remark}

\begin{corollary}\label{corl:optimal_value_function}
Under the optimal certainty equivalence control law (\ref{eq:CE-law})-(\ref{eq:CE-gain}), the optimal cost-to-go $V_k^{i,\ast}$ equals
\begin{align}
V_k^{i,\ast}&\!=\|\E\left[x_k^i|\mathcal{I}_k^i\right]\|^2_{P_k^i}\\\nonumber
&\!+\E\!\left[\|e_k^i\|^2_{P_k^i}+\!\sum\nolimits_{t=k}^{T-1}\!\|e_t^i\|^2_{\tilde{P}^i_t}\Big| \mathcal{I}_k^i\right]\!+\!\sum\nolimits_{t=k+1}^T \!\!\Tr (P_t^i \Sigma_w^i),
\end{align}
where, $e_k^i\triangleq x_k^i - \E\left[x_k^i|\mathcal{I}_k^i\right]$, and $\tilde{P}^i_t=Q_1^i+A_i^\top P_{t+1}^iA_i-P_t^i$. 
Moreover, the estimator, at time-step $k$, is given as follows
\begin{align}\label{eq:estimator_dynamics}
\!\!\!\!\!\!\E\!\left[x_k^i|\mathcal{I}_k^i\right]&\!=\!\sum\nolimits_{j=0}^{\text{min} \{D,k+1\}}\!b_{j,k}^i \E\!\left[x_k^i|x_{k-j}^i,u_0^i,...,u_{k-1}^i\right]\!,\!\!\!
\end{align}
and, for all $j\in\mathcal{D}$, and $k\geq j$, we have
\begin{align}\label{coeff:b}
b_{j,k}^i&=\prod\nolimits_{d=0}^{j-1}\prod\nolimits_{l=0}^d [1-\vartheta_{k-d}^i(l)][\sum\nolimits_{d=0}^j \vartheta_{k-j}^i(d)].
\end{align}
For, $k<j$, the $b_{0,k}^i,...,b_{k,k}^i$'s are defined as in (\ref{coeff:b}), $b_{k+1,k}^i\!=\!\prod_{d=0}^{k}\prod_{l=0}^d [1\!-\!\vartheta_{k-d}^i(l)]$, and for notational convenience, we define $b_{k+2,k}^i\!=\!...\!=\!b_{D,k}^i\!=\!0$.
\end{corollary}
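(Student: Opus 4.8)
The plan is to prove Corollary \ref{corl:optimal_value_function} by combining the certainty-equivalence structure from Theorem \ref{thm:CE} with a standard backward dynamic-programming argument, and then separately analyzing the conditional-expectation estimator implied by the information set $\mathcal{I}_k^i$ in (\ref{set:controller-information}). First I would establish the value function $V_k^{i,\ast}$. Substituting the optimal feedback $u_k^{i,\ast}=-L_k^{i,\ast}\E[x_k^i\mid\mathcal{I}_k^i]$ into the per-stage LQG cost and using the orthogonality decomposition $x_k^i=\E[x_k^i\mid\mathcal{I}_k^i]+e_k^i$ with $\E[e_k^i\mid\mathcal{I}_k^i]=0$, one gets a clean split of the quadratic form into an ``estimate'' part weighted by $P_k^i$ and an ``error'' part. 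I would then proceed by backward induction on $k$: assume the claimed form of $V_{k+1}^{i,\ast}$, plug in the dynamics (\ref{eq:sys_model}), take the conditional expectation over $\mathcal{I}_k^i$, and collect terms. The noise term $w_k^i$ contributes $\Tr(P_{k+1}^i\Sigma_w^i)$, which builds up the sum $\sum_{t=k+1}^T\Tr(P_t^i\Sigma_w^i)$; the innovations in the estimator between steps $k$ and $k+1$ contribute the extra error-covariance terms weighted by $\tilde P_t^i=Q_1^i+A_i^\top P_{t+1}^iA_i-P_t^i$, which is exactly the ``gap'' in the Riccati recursion. The base case $k=T$ is immediate from $V_T^{i,\ast}=\E[\|x_T^i\|_{Q_2^i}^2\mid\mathcal{I}_T^i]$ and $P_T^i=Q_2^i$, noting the two sums are empty.

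Next I would derive the estimator formula (\ref{eq:estimator_dynamics})--(\ref{coeff:b}). The key observation is that, because the plant controller knows all past $\vartheta_{[0,k]}^i$ and $u_{[0,k-1]}^i$ and the dynamics are linear, $\E[x_k^i\mid\mathcal{I}_k^i]$ depends on the received measurements $\mathcal{Y}_t^i$, $t\le k$, only through the \emph{freshest} delivered state; all staler packets are redundant once a fresher one has arrived (as already argued informally in the text after (\ref{set:received-state})). So I would identify, for each $k$, the smallest $j$ such that $x_{k-j}^i$ has been delivered by time $k$ — equivalently, the most recent state sample that has reached $\mathcal{C}_i$. Then $\E[x_k^i\mid\mathcal{I}_k^i]=\E[x_k^i\mid x_{k-j}^i,u_0^i,\dots,u_{k-1}^i]$, which by iterating (\ref{eq:sys_model}) equals $A_i^j x_{k-j}^i+\sum_{s=0}^{j-1}A_i^sB_iu_{k-1-s}^i$ (the disturbances average to zero). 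The convex-combination form in (\ref{eq:estimator_dynamics}) is then just a bookkeeping device: exactly one coefficient $b_{j,k}^i$ is $1$ and the rest are $0$.

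The crux is to verify that the explicit product formula (\ref{coeff:b}) is indeed the indicator of the event ``$x_{k-j}^i$ is the freshest delivered sample at time $k$.'' I would unpack it as follows. A sample $x_{k-j}^i$ is delivered by time $k$ iff it was sent on some link $\ell_d$ with $d\le j$, i.e. $\sum_{d=0}^{j}\vartheta_{k-j}^i(d)=1$ — that is the second bracket. It is the \emph{freshest} such delivery iff no later sample $x_{k-d}^i$ with $0\le d\le j-1$ has been delivered by time $k$; sample $x_{k-d}^i$ is delivered by time $k$ iff it used a link $\ell_l$ with $l\le d$, so ``not delivered'' is $\prod_{l=0}^{d}[1-\vartheta_{k-d}^i(l)]$, and requiring this for every $d=0,\dots,j-1$ gives the first double product. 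Multiplying the two yields exactly (\ref{coeff:b}); and because the $\vartheta_{k-j}^i(d)$ are binary with $\sum_d\vartheta_{k-j}^i(d)=1$, the product is itself $\{0,1\}$-valued, and the collection $\{b_{j,k}^i\}_j$ sums to $1$ (some state is always available once $k$ is large enough — and for small $k$ the fallback $b_{k+1,k}^i$ with the convention $\vartheta_{-1}^i=\cdots=\vartheta_{-D}^i=0$ covers the case that nothing has yet arrived, so the estimate reduces to the open-loop prediction from $x_0^i$, or from the prior mean when even $x_0^i$ has not arrived). The boundary definitions $b_{k+1,k}^i=\prod_{d=0}^k\prod_{l=0}^d[1-\vartheta_{k-d}^i(l)]$ and $b_{k+2,k}^i=\cdots=b_{D,k}^i=0$ handle $k<j$ consistently. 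I expect this combinatorial identification of the ``freshest-packet'' indicator — and carefully checking the edge cases at small $k$ and with out-of-order delivery — to be the main obstacle; the dynamic-programming part is routine once Theorem \ref{thm:CE} is in hand.
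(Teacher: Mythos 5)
Your argument is correct and is essentially the standard route: the paper itself omits this proof, deferring to Theorem~1 and Proposition~1 of \cite{8405590}, and those proofs proceed exactly as you do — backward dynamic programming with the orthogonal split $x_k^i=\E[x_k^i|\mathcal{I}_k^i]+e_k^i$ yielding the $\tilde{P}_t^i$-weighted error terms, plus the identification of $b_{j,k}^i$ as the $\{0,1\}$-valued indicator that $x_{k-j}^i$ is the freshest delivered sample (your unpacking of the two factors in (\ref{coeff:b}) and of the boundary case $b_{k+1,k}^i$ is the right one). The only point worth making explicit is the absence of a dual effect — future errors $e_t^i$ do not depend on $u_k^i$ because $\vartheta$ is control-independent under Assumption~1 — which is what licenses the clean DP recursion you describe.
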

\vspace{1mm}
\begin{proof}
The proof is similar to the proofs of Theorem 1 and Proposition 1 in \cite{8405590} and hence omitted for brevity.
\end{proof}

\begin{remark}
Theorem \ref{thm:CE} shows that the optimal control law is certainty equivalence (\ref{eq:CE-law}), yet $u_k^{i,\ast}$, i.e., the control law's realization, is computed based on $\E[x_k^i|\mathcal{I}_k^i]$ which is function on $\vartheta_{[k-D+1,k]}^i$, see \eqref{eq:estimator_dynamics}. 
We discuss in the next section that, if the delay controller is impassive, $V_k^{i,\ast}$ is estimated according to $\theta_{[0,k-1]}^i$. Thus, if at a time $t\!\in\![k\!-\!D,k\!-\!1]$, $\vartheta_t^i\neq \theta_t^i$, the delay controller computes $\E[V_k^{i,\ast}]$ as if $\theta_t^i$ is realized. Hence, $\E[V_k^{i,\ast}(\gamma^{i,\ast},\xi^i)]\!\neq \!\E[V_k^{i,\ast}(\gamma^{i,\ast},\pi)]$, despite similar $\gamma^{i,\ast}$ laws.
\end{remark}

\subsection{Optimal delay control and resource allocation policies}\label{sec:optimal-delay-model-aware}
We now derive optimal delay control and resource~allocation policies $(\xi_k^{i,\ast}\!,\pi_k^\ast)$ under the following two awareness models of the \textit{dynamic variables}. In this section, we assume the constant model parameters of all sub-systems are accessible for the network manager. Resource allocation without knowledge of constant parameters is studied in Section \ref{sec:no_model_awareness}.%delay-sensitivity models: %Moreover, we address the resource allocation problem under two different information structures and derive $\vartheta_k^\ast$ accordingly; one scenario (used in subsections 1 and 2) includes only information about delay control variables $\theta_k$, while the other (introduced in subsection 3) contains additionally the estimation of system states, i.e. $\E[x_{k-1}^i]$, $\forall i\in\mathrm{N}$.
\subsubsection{Impassive delay control}
We call the delay control policy an \textit{impassive process} if the decision on $\theta_k^i$'s is made independent of $\vartheta_{[0,k-1]}^i$, i.e., the delay controller is passive w.r.t. the resource manager's decisions. Hence, it decides on $\theta_k^i$'s knowing nothing about possible re-allocation by the resource manager. Therefore, the information set $\bar{\mathcal{I}}_k^i$ upon which $\theta_k^i=\xi_k^i(\bar{\mathcal{I}}_k^i)$ is computed impassively (see Fig. \ref{fig:info-topology}) becomes
\begin{equation}\label{set:DC-information-impassive}
\bar{\mathcal{I}}_k^i=\{\theta_0^i,... ,\theta_{k-1}^i,u_0^i,... ,u_{k-1}^i\}.
\end{equation}
Note that, although $\vartheta_{[0,k-1]}^i$ is not incorporated in computing $\theta_k^i$, the variable $\vartheta_k^i$ depends on $\{\theta_0, \ldots,\theta_k\}$. Moreover, the results of Theorem \ref{thm:CE} hold for $\bar{\mathcal{I}}_k^i$ in (\ref{set:DC-information-impassive}), as we have $\bar{\mathcal{I}}_k^i\!\subseteq\!\mathcal{I}_k^i$.

\begin{figure}[tb]
\centering
\psfrag{g}[c][c]{\tiny \text{LTI dynamics}}
\psfrag{a}[c][c]{\scriptsize $\bar{\mathcal{I}}_{k}^i$}
\psfrag{b}[c][c]{\scriptsize $\tilde{\mathcal{I}}_{k}$}
\psfrag{c}[c][c]{\scriptsize $\mathcal{I}_{k}^i$}
\psfrag{d}[c][c]{\scriptsize $\xi^i_k$}
\psfrag{f}[c][c]{\scriptsize $\pi_k$}
\psfrag{e}[c][c]{\scriptsize $\gamma_k^i$}
\psfrag{j}[c][c]{\scriptsize $u_{k}^i$}
\psfrag{k}[c][c]{\scriptsize $\vartheta_{k}^i$}
\psfrag{h}[c][c]{\scriptsize $x_{k+1}^i$}
\psfrag{m}[c][c]{\scriptsize $\theta_{k}^i$}
\psfrag{aa}[c][c]{\scriptsize $\bar{\mathcal{I}}_{k+1}^i$}
\psfrag{bb}[c][c]{\scriptsize $\tilde{\mathcal{I}}_{k+1}$}
\psfrag{cc}[c][c]{\scriptsize $\mathcal{I}_{k+1}^i$}
\psfrag{dd}[c][c]{\scriptsize $\xi^i_{k+1}$}
\psfrag{ff}[c][c]{\scriptsize $\pi_{k+1}$}
\psfrag{ee}[c][c]{\scriptsize $\gamma_{k+1}^i$}
\psfrag{jj}[c][c]{\scriptsize $u_{k+1}^i$}
\psfrag{kk}[c][c]{\scriptsize $\vartheta_{k+1}^i$}
\psfrag{hh}[c][c]{\scriptsize $x_{k+2}^i$}
\psfrag{mm}[c][c]{\scriptsize $\theta_{k+1}^i$}
\psfrag{CCC}[c][c]{\tiny \text{Centralized resource manager}}
\includegraphics[width=8cm, height=3.3cm]{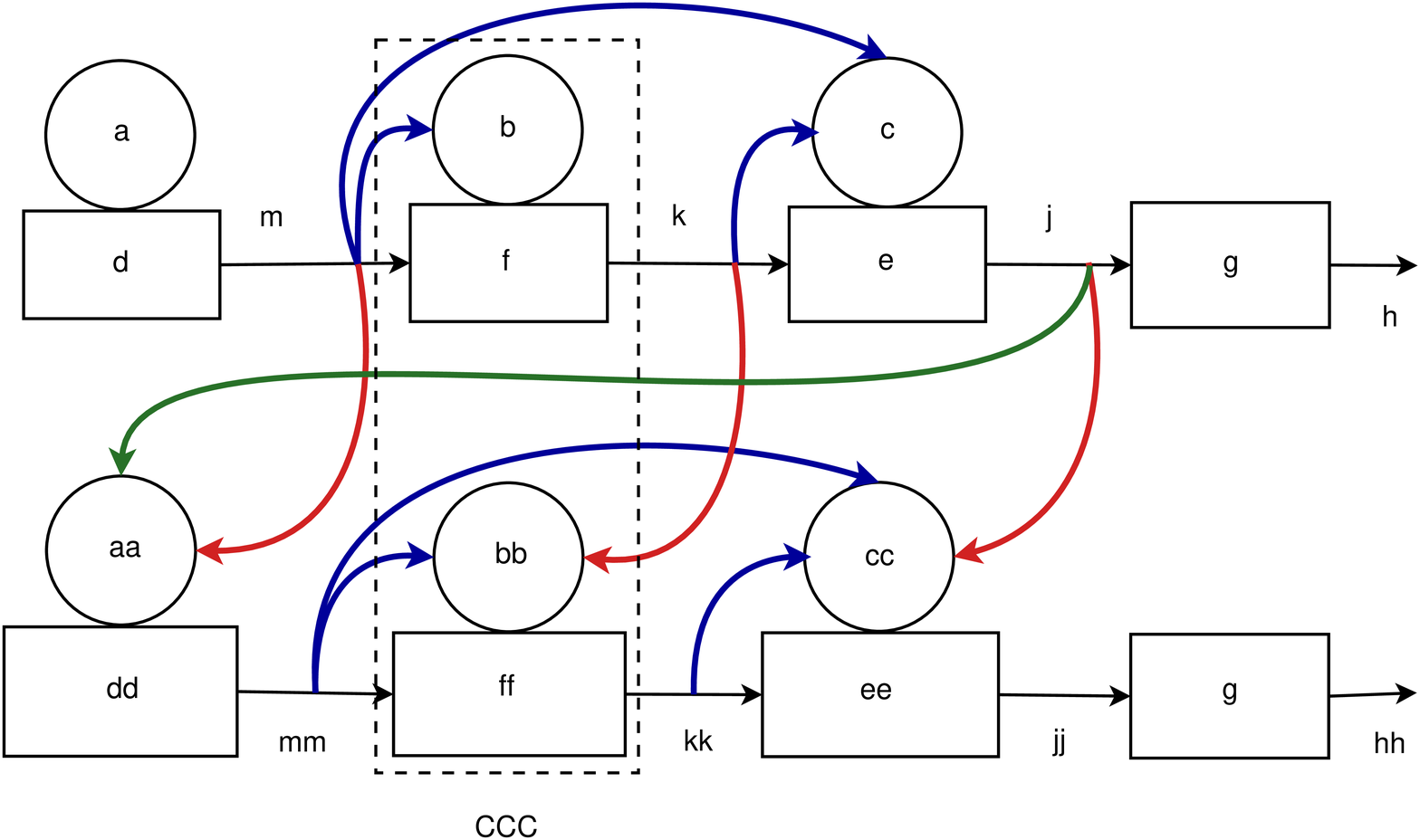}
\caption{Awareness model of the impassive delay control approach. Blue arrows represent policies' cross-awareness within one time-step. Red arrows show a policy maker's self-awareness. Green arrows depict state cross-awareness from one time-step to the next.}
\label{fig:info-topology}
\vspace{-5mm}
\end{figure}

\begin{theorem}\label{thm:impassiveDC}
Consider the problem (10) and let $\gamma^{i,\ast}, i\!\in \!\mathrm{N}$ follow the certainty equivalence law (\ref{eq:CE-law})-(\ref{eq:CE-gain}). Given $\bar{\mathcal{I}}_k^i$ and $\tilde{\mathcal{I}}_k$ in (\ref{set:DC-information-impassive}) and (\ref{set:RM-information}), the jointly optimal impassive delay control and resource allocation policies are offline solutions of the following constrained mixed-integer linear-programs (MILP)
%&\theta_{[0,T-1]}^{i,\ast}=\arg\!\min_{\xi_{[0,T-1]}^i}\!\!J^i(\gamma^{i,\ast},\xi_{[0,T-1]}^i(\bar{\mathcal{I}}_{[0,T-1]}^i))=\\\nonumber
\begin{align}\label{eq:opt-imp-del-cont}
&\theta_{[0,T-1]}^{i,\ast}=\argmin_{\xi_{[0,T-1]}^i}J^i(\gamma^{i,\ast},\xi_{[0,T-1]}^i(\bar{\mathcal{I}}_{[0,T-1]}^i))=\\\nonumber
&\argmin_{\xi_{[0,T-1]}^i} \sum_{t=0}^{T-1}\!\bigg[\theta_t^{i^\top}\!\Lambda\!+\!\sum_{l=0}^{\tau_{t}^i} \sum_{j=l}^{\tau_{t}^i}\bar{b}_{j,t}^i \textsf{Tr}(\tilde{P}_{t}^i A_{i}^{{l-1}^{\textsf{T}}} \Sigma_{w}^i A_{i}^{l-1})\bigg]\\\nonumber
&\text{s. t.}\quad \;\;\bar{b}_{j,t}^i=\prod\nolimits_{d=0}^{j-1}\prod\nolimits_{l=0}^d [1-\theta_{t-d}^i(l)][\sum\nolimits_{d=0}^j \theta_{t-j}^i(d)],\\\nonumber
& \qquad\;\;\;\sum\nolimits_{l=0}^D \theta_t^i(l)\!=\!1,
\quad\!\!\!\!\sum\nolimits_{j=0}^{\tau_t^i} \bar{b}_{j,t}^i\!=\!1, \quad \!\!\!\!\sum\nolimits_{j=t+2}^{D} \bar{b}_{j,t}^i\!=\!0,
%\bar{b}_{j,t}^i \leq \sum_{l=0}^j \theta_{t-j}^i(l)
\end{align}
and, \vspace{-3mm}
%and for each $j\in[0,D]$,
%\begin{equation*}
%\bar{b}_{j,t}^i=\prod_{d=0}^{j-1}\prod_{l=0}^d [1-\theta_{t-d}^i(l)][\sum_{d=0}^j \theta_{t-j}^i(d)].
%\end{equation*}
%\begin{align}\nonumber
%b_{0,t}^i&=\theta_t^i(0)\\\nonumber
%b_{1,t}^i&=[1-\theta_t^i(0)][\theta_{t-1}^i(0)+\theta_{t-1}^i(1)]\\\nonumber
%b_{2,t}^i&=[1-\theta_t^i(0)][1-\theta_{t-1}^i(0)][1-\theta_{t-1}^i(1)][\sum_{d=0}^2 \theta_{t-2}^i(d)]\\\label{eq:b-coefficients}
%&\vdots\\\nonumber
%b_{D,t}^i&=\prod_{d=0}^{D-1}\prod_{j=0}^d [1-\theta_{t-d}^i(j)][\sum_{d=0}^D \theta_{t-D}^i(d)]
%\end{align}
\begin{align}\label{eq:opt-imp-res-alloc}
&\!\!\!\vartheta_{[0,T-1]}^{\ast}\!=\!\argmin_{\pi_{[0,T-1]}}\!\frac{1}{N}\!\sum_{i=1}^N J^{i}(\gamma^{i,\ast}\!,\pi_{[0,T-1]}(\tilde{\mathcal{I}}_{[0,T-1]}))\!=\!\\\nonumber
&\!\!\!\argmin_{\pi_{[0,T-1]}}\frac{1}{N}\!\sum_{i=1}^N \sum_{t=0}^{T-1}\!\left[\vartheta_t^{i^\top} \!\!\Lambda\!+\!\sum_{l=0}^{\tau_{t}^i} \sum_{j=l}^{\tau_{t}^i}b_{j,t}^i \textsf{Tr}(\tilde{P}_{t}^i A_{i}^{{l-1}^{\textsf{T}}} \Sigma_{w}^i A_{i}^{l-1})\!\right]\\\nonumber
&\!\!\!\text{s. t.} \quad\; -\alpha_i\leq(\vartheta_t^i-\theta_t^{i,\ast})^\top \Delta\leq \beta_i, \; b_{j,t}^i \;\text{as defined in (\ref{coeff:b})},\\\nonumber
&\!\!\!\qquad \quad\sum\nolimits_{i=1}^N \vartheta_{t}^i(d)\leq c_d,\; \forall d\in\mathcal{D}, \; t\in[0,T-1].
\end{align}
where, $\tau_t^i\!\triangleq \!\min\{D,t+1\}$. 
%where, $b_{j,t}^i$ is defined in the expression (\ref{coeff:b}).
%\begin{align}\nonumber
%&\vartheta_{[0,T-1]}^{\ast}=\argmin_{\pi_{[0,T-1]}}\frac{1}{N}\!\sum\nolimits_{i=1}^N [J^{i}(\gamma^{i^\ast}\!\!,\pi(\tilde{\mathcal{I}}))\!-\!J^i(\gamma^{i^\ast}\!\!,\xi^{i^\ast})]\!=\\\label{eq:opt-imp-res-alloc}
%&\argmin_{\pi_{[0,T-1]}}\frac{1}{N}\!\sum\nolimits_{i=1}^N \sum\nolimits_{t=0}^{T-1}\left[(\vartheta_t^i-\theta_t^{i,\ast})^\top \Lambda\right]+\\\nonumber
%&\argmin_{\pi_{[0,T-1]}}\frac{1}{N}\!\sum_{i=1}^N \sum_{t=0}^{T-1}\left[\sum_{l=0}^{\tau_{t}^i} \sum_{j=l}^{\tau_{t}^i}\Delta b_{j,t}^i \textsf{Tr}(\tilde{P}_{t}^i A_{i}^{{l-1}^{\textsf{T}}} \Sigma_{w}^i A_{i}^{l-1})\right]\\\nonumber
%&\text{s. t.} \quad\;\; \sum\nolimits_{i=1}^N \vartheta_{t}^i(d)\leq c_d, \; \forall d\in\mathcal{D}, \; t\in[0,T],\\\nonumber
%&\qquad \quad\;\Delta b_{j,t}^i=b_{j,t}^i-\bar{b}_{j,t}^i, \;j\in\mathcal{D}.
%\end{align}
%where, introducing the binary variables $\tilde{b}_{j,t}^i$'s, $j\in[0,D]$, that can be obtained by substituting $\theta^i$ with $\vartheta^i$ in (\ref{eq:b-coefficients}), we have
%\begin{align*}
%&J^{i}(\gamma^{i^\ast}\!,\pi_{[0,T-1]}(\tilde{\mathcal{I}}_{[0,T-1]}))=\\
%&\!\sum_{t=0}^{T-1}\left[\sum_{l=0}^{\tau_{t}^i} \sum_{j=l}^{\tau_{t}^i}\tilde{b}_{j,t}^i \;\textsf{Tr}(\tilde{P}_{t}^i \;A_{i}^{{l-1}^{\textsf{T}}} \Sigma_{w}^i\; A_{i}^{l-1})+\vartheta_{t}^{i^\textsf{T}}\Lambda \right]\\
%&+\hat{x}_{0}^{i^\textsf{T}}P_{0}^i\;\hat{x}_{0}^i+\textsf{Tr}(P_{0}^i\;\Sigma_{x_{0}^i})+\sum\nolimits_{t=1}^T \textsf{Tr}(P_{t}^i\;\Sigma_{w}^i).
%\end{align*}
\end{theorem}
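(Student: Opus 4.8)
The plan is to establish Theorem~\ref{thm:impassiveDC} by first reducing the social optimization~(10) to two decoupled subproblems under the impassive awareness model, and then evaluating the relevant cost functions in closed form. Starting from the cost decomposition implicit in~\eqref{eq:global_OP} and~\eqref{eq:local_objective_var}, I would first invoke Theorem~\ref{thm:CE} to substitute the certainty-equivalence control law, so that each $J^i$ depends on the remaining policies only through the LQG estimation-error terms and the communication cost $\vartheta_k^{i^\top}\Lambda$ (resp.\ $\theta_k^{i^\top}\Lambda$). The key structural observation is that, because the delay controller is impassive, $\bar{\mathcal{I}}_k^i$ in~\eqref{set:DC-information-impassive} contains no network-reallocation variables, so the minimization $\min_{u^i,\theta^i}J^i(u^i,\theta^i)$ appearing in~\eqref{eq:global_OP} is a genuinely local problem whose optimizer $\theta^{i,\ast}_{[0,T-1]}$ can be computed offline, independently of the other sub-systems and of $\pi$. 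This yields the first MILP~\eqref{eq:opt-imp-del-cont}. The second piece, $\frac{1}{N}\sum_i \E[J^i(\gamma^{i,\ast},\vartheta^i)]$, is then minimized over $\pi$ with $\theta^{i,\ast}$ treated as a fixed input, subject to the sensitivity constraints~\eqref{sensit_constrains} and the capacity constraints~\eqref{prob:global_OP_last_line}, giving~\eqref{eq:opt-imp-res-alloc}; note the constant $\min_{u^i,\theta^i}J^i$ terms drop out of the $\argmin$.

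The second main step is the explicit evaluation of the expected cost. Using Corollary~\ref{corl:optimal_value_function}, the cost-to-go at $k=0$ is $V_0^{i,\ast}$, and taking expectations removes the $\|\E[x_0^i|\mathcal{I}_0^i]\|^2_{P_0^i}$ and $\Tr(P_t^i\Sigma_w^i)$ terms as constants (they do not depend on the delay/allocation policies), leaving the estimation-error sum $\E[\sum_{t}\|e_t^i\|^2_{\tilde P_t^i}]$. The error $e_t^i = x_t^i - \E[x_t^i|\mathcal{I}_t^i]$ depends on \emph{which} past state was most recently delivered: if the freshest available measurement is $x^i_{t-l}$, then under certainty equivalence the prediction error accumulates exactly $l$ steps of disturbance, so $\E[\|e_t^i\|^2_{\tilde P_t^i} \mid \text{freshest is } x^i_{t-l}] = \sum_{d=0}^{l-1}\Tr(\tilde P_t^i A_i^{d^\top}\Sigma_w^i A_i^d)$, which I would write telescoped as the double sum appearing in~\eqref{eq:opt-imp-del-cont}–\eqref{eq:opt-imp-res-alloc}. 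The probability-free indicator that the freshest delivered sample has age $j$ is precisely the coefficient $b^i_{j,t}$ from~\eqref{coeff:b} (or $\bar b^i_{j,t}$ with $\theta$ in place of $\vartheta$ for the impassive local problem), and the three side constraints $\sum_l\theta_t^i(l)=1$, $\sum_j \bar b^i_{j,t}=1$, $\sum_{j\ge t+2}\bar b^i_{j,t}=0$ just re-encode~\eqref{const1} and the boundary conventions for $b^i_{j,t}$ below Corollary~\ref{corl:optimal_value_function}. Collecting terms over $t\in[0,T-1]$ and adding $\theta_t^{i^\top}\Lambda$ (resp.\ $\vartheta_t^{i^\top}\Lambda$) produces the stated objectives; linearity in the binary decision variables, after the products $\bar b^i_{j,t}$ are introduced as auxiliary variables with their defining (bilinear-but-linearizable) constraints, gives the MILP form.

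The main obstacle I anticipate is justifying the clean separation in the first step — specifically, that under Assumption~1 and the impassive information pattern the social cost~$J$ in~\eqref{eq:global_OP} truly splits into a sum of $N$ local delay-control problems plus one centralized allocation problem with $\theta^{i,\ast}$ frozen, rather than a coupled game. This requires checking that $\theta^{i,\ast}_{[0,T-1]}$ computed from the resource-unlimited local problem is still the correct reference inside $\min_{u^i,\theta^i}J^i(u^i,\theta^i)$ even though the \emph{realized} trajectory differs once $\vartheta\neq\theta$; here the argument is that the second term in~\eqref{eq:global_OP} is, by definition, the local optimum over $(u^i,\theta^i)$ ignoring $\pi$, so it is a policy-independent constant w.r.t.\ $\pi$ and w.r.t.\ the actual $\vartheta$-driven dynamics, and hence drops from the $\pi$-minimization; and conversely the impassive $\theta^i$ never sees $\vartheta$, so its optimization is exactly the unconstrained local LQG-plus-communication problem. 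A secondary technical point is being careful with the $k<j$ boundary cases for $b^i_{j,t}$ (the $b^i_{k+1,k}$ and $b^i_{k+2,k}=\cdots=0$ conventions), which feed into the upper limit $\tau_t^i = \min\{D,t+1\}$ on the inner sums; I would handle this by the same convention $\vartheta^i_{-1}(d)=\cdots=\vartheta^i_{-D}(d)=0$ already adopted after~\eqref{set:received-state}. Once these bookkeeping issues are settled, the remainder is the routine LQG error-covariance computation already carried out in~\cite{8405590}.
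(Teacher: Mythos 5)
Your proposal follows essentially the same route as the paper's proof: invoke the certainty-equivalence separation from Theorem~\ref{thm:CE} (the decomposition~\eqref{eq:global_cost_separation}), observe that the impassive information set~\eqref{set:DC-information-impassive} makes the local delay-control problem independent of $\pi$ and of the noise realizations (hence offline), evaluate the expected cost-to-go via the error covariance written in terms of the freshness indicators $\bar b^i_{j,t}$ (resp.\ $b^i_{j,t}$), and then minimize the resource-allocation piece with $\theta^{i,\ast}$ frozen under the sensitivity and capacity constraints. The argument is correct and matches the paper's, including the key justification that the $\min_{u^i,\theta^i}J^i$ term is a $\pi$-independent constant and the handling of the $k<j$ boundary conventions.
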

\begin{proof}
See the Appendix \ref{Append:thm2}.
\end{proof}

Next, we propose (without a proof) a sufficient capacity condition for $c_d, d\!\in \!\mathcal{D}$, ensuring the allocated resources~are within $\{\ell_{\max\{0,d-\alpha_i\}}, \ldots, \ell_d,\ldots,\ell_{\min\{d+\beta_i,D\}}\}$, and the MILP (\ref{eq:opt-imp-res-alloc}) is feasible. Selected $c_d$'s should additionally satisfy (\ref{link-capacity-constraint}) and (\ref{tot-transmission}) to ensure the problem (10) is non-trivial, and avoid packet drop. We show in Section~\ref{num_res} that the condition is not necessary.
\begin{corollary}\label{corol:feasibility}
The MILP problem (\ref{eq:opt-imp-res-alloc}) is feasible if (\ref{tot-transmission}) is satisfied and $\forall d\!\in \!\mathcal{D}$, the following sufficient condition holds
\begin{equation}\label{feasibility}
c_d\geq \bigg\lfloor\!\frac{N}{1\!+\!\frac{1}{N}\left[h(\alpha,\beta)\right]}\!\bigg\rfloor,
\end{equation} 
with, $h(\alpha,\beta)\!=\!\sum_{i\in N_1} \mathbbm{1}(d\alpha_i)\!+\!\sum_{j\in N_2} \mathbbm{1}((D-d)\beta_j)\!+\!\mathbbm{1}(d)\sum_{l\in N_3} \mathbbm{1}(d\alpha_l) \!+\!\mathbbm{1}(D-d)\sum_{l\in N_3} \mathbbm{1}((D-d)\beta_l)$, where, $\forall i\!\in\! N_1$, $j\!\in\! N_2$ and $l\!\in \!N_3$, we have $(\alpha_i\!\neq\!0, \beta_i\!=\!0)$, $(\alpha_j\!=\!0, \beta_j\!\neq\!0)$, $(\alpha_l,\beta_l\!\neq\!0)$ and $|N_1|\cup |N_2|\cup |N_3|=N$.
\end{corollary}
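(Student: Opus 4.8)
The plan is to give a constructive combinatorial argument: I will show that if each $c_d$ satisfies the stated lower bound, then at every time-step $k$ the resource manager can always find a feasible assignment $\vartheta_k$ satisfying both the per-sub-system tolerance constraints \eqref{sensit_constrains} and the capacity constraints \eqref{prob:global_OP_last_line}. Since the MILP \eqref{eq:opt-imp-res-alloc} minimizes over a nonempty feasible set, feasibility of the assignment problem is exactly feasibility of the MILP. First I would fix an arbitrary time-step and an arbitrary request profile $\theta_k$ (which, by \eqref{const1}, places each of the $N$ sub-systems on exactly one link) and ask whether the requests can be redistributed to respect capacities. The key observation is that a sub-system $i$ that requested link $\ell_d$ can be moved to any link in $\{\ell_{\max\{0,d-\alpha_i\}},\ldots,\ell_{\min\{d+\beta_i,D\}}\}$; in particular, a sub-system with $\alpha_i=\beta_i=0$ is \emph{immovable} and must stay on its requested link, whereas a sub-system with a nonzero tolerance has at least one alternative link available (and, if both $\alpha_i$ and $\beta_i$ are nonzero and $d$ is interior, two alternatives).

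Next I would count, for a fixed target link $\ell_d$, how many of the $N$ sub-systems could in principle be accommodated on links \emph{other} than $\ell_d$ when $\ell_d$ is over-subscribed. A sub-system from $N_1$ (only $\alpha_i\neq 0$) can be shifted away from $\ell_d$ toward lower-index links precisely when $d>0$, contributing $\mathbbm{1}(d\alpha_i)$; a sub-system from $N_2$ (only $\beta_j\neq 0$) can be shifted toward higher-index links precisely when $d<D$, contributing $\mathbbm{1}((D-d)\beta_j)$; and a sub-system from $N_3$ (both nonzero) contributes in both directions, giving the two terms weighted by $\mathbbm{1}(d)$ and $\mathbbm{1}(D-d)$ in $h(\alpha,\beta)$. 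The quantity $h(\alpha,\beta)$ therefore measures the total "slack" available at link $\ell_d$ — the number of would-be users of $\ell_d$ that the manager can legally re-route elsewhere. A worst-case adversarial request profile drives as many sub-systems as possible onto $\ell_d$; the manager can retain $c_d$ of them and must re-route the rest, which is possible iff the number of re-routable requests is at least $N-c_d$ in the worst case. Turning the inequality $N - c_d \le$ (re-routable capacity, expressed via $h$ and $N$) around and taking the floor yields exactly \eqref{feasibility}. The constraint \eqref{tot-transmission} guarantees that the re-routed packets land on links that still have residual capacity, so no packet is dropped.

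The main obstacle I anticipate is making the "re-routable capacity" bound rigorous and tight: one must verify that the re-routing can be done \emph{simultaneously} for all over-subscribed links without creating a new capacity violation downstream — i.e. that a greedy or flow-based argument closes, rather than merely bounding each link in isolation. The clean way to handle this is to phrase the assignment problem as an integral feasibility question on a bipartite graph (sub-systems vs.\ links, with edges given by the tolerance windows and link-capacity upper bounds), and to invoke a Hall-type / max-flow–min-cut condition; the bound \eqref{feasibility} then emerges as the worst-case evaluation of the Hall deficiency over all request profiles, with $\sum_{d} c_d \ge N$ ensuring the global supply-demand balance. Because the statement is offered without proof in the paper and the bound is explicitly only \emph{sufficient}, I would not attempt to show tightness; it suffices to exhibit one feasible $\vartheta_k$ under \eqref{feasibility} and \eqref{tot-transmission}, and to remark (as the paper does) that Section~\ref{num_res} shows the condition can be relaxed in practice.
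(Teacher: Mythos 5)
The paper offers this corollary explicitly ``without a proof,'' so there is no reference argument to compare against; your proposal has to stand on its own, and as written it is a plan rather than a proof. The framing is sensible --- casting the re-allocation step as an integral feasibility problem on a bipartite graph of sub-systems versus links, with edges given by the tolerance windows $\{\ell_{\max\{0,d-\alpha_i\}},\ldots,\ell_{\min\{d+\beta_i,D\}}\}$ and capacities $c_d$, and then invoking a Hall/max-flow--min-cut condition --- and your reading of $h(\alpha,\beta)$ as counting escape directions (with $N_3$ members counted once per direction, hence twice) is consistent with how the formula specializes in the paper's simulation (e.g.\ $c_0\geq 10$ and $c_2\geq 6$ for $N=20$, $\alpha_i=\beta_i=3$, $D=5$).

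The genuine gap is the step you dispatch in one sentence: ``turning the inequality $N-c_d\le$ (re-routable capacity) around and taking the floor yields exactly (19).'' It does not. A per-link worst-case count of movable requests would give a bound of the form $c_d\geq N-r_d$ for some integer count $r_d$ of re-routable sub-systems, whereas (19) reads $c_d\geq\lfloor N^2/(N+h)\rfloor = \lfloor N - Nh/(N+h)\rfloor$; matching the two would force $r_d=Nh/(N+h)$, a harmonic-mean-like quantity with no evident combinatorial meaning, and $h$ moreover double-counts $N_3$ members rather than counting movable sub-systems. So the specific algebraic form of (19) is not recovered by your counting argument, and the Hall-deficiency computation you defer to (which would have to be evaluated over all request profiles and all cuts simultaneously, since re-routed packets consume capacity on neighboring links that may themselves be oversubscribed) is precisely the part that is missing. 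You correctly flag this simultaneity issue as ``the main obstacle,'' but flagging it is not resolving it: to complete the proof you would need to (i) write down the Hall condition for the bipartite system, (ii) identify the worst-case request profile and cut, and (iii) verify that (19) together with $\sum_d c_d\geq N$ implies the condition for every cut --- none of which is carried out, and step (iii) is where the unusual $N/(1+h/N)$ form would have to be justified or shown to be a (possibly loose) sufficient surrogate.
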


\subsubsection{Reactive delay control}
We call the delay control policy \textit{reactive} if the decisions on $\theta_k^i$'s are per-time made incorporating the knowledge of $\vartheta_{[0,k-1]}^i$. Thus, the information set $\bar{\mathcal{I}}_k^i$ upon which $\theta_k^i=\xi_k^i(\bar{\mathcal{I}}_k^i)$ is computed, needs to contain $\vartheta_{[0,k-1]}^i$, hence $\bar{\mathcal{I}}_k^i$ coincides with (\ref{set:DC-information}).

\begin{theorem}\label{thm:reactiveDC}
Consider the optimization problem (10). Let $\gamma^{i,\ast}, i\!\in \!\mathrm{N}$ follow the certainty equivalence law (\ref{eq:CE-law})-(\ref{eq:CE-gain}). Given the information sets $\bar{\mathcal{I}}_k^i$ and $\tilde{\mathcal{I}}_k$, respectively, in (\ref{set:DC-information}) and (\ref{set:RM-information}), the optimal reactive delay control law is computed online from the following constrained MILP
\begin{align}\label{eq:thm3-del-ctrl}
&\theta_{[k,T-1]}^{i,\ast}\!=\argmin_{\xi_{[k,T-1]}^i}J^i(\gamma^{i,\ast},\xi_{[k,T-1]}^i(\bar{\mathcal{I}}_{[k,T-1]}^i))=\\\nonumber
&\argmin_{\xi_{[k,T-1]}^i} \sum_{t=k}^{T-1}\!\left[\theta_t^{i^\top}\!\Lambda\!+\!\sum_{l=0}^{\tau_{t}^i} \sum_{j=l}^{\tau_{t}^i}\tilde{b}_{j,t}^i \textsf{Tr}(\tilde{P}_{t}^i A_{i}^{{l-1}^{\textsf{T}}} \Sigma_{w}^i A_{i}^{l-1})\right]\\\nonumber
&\text{s. t.}\quad \!\!\tilde{b}_{0,t}^i=\theta_t^i(0), \quad \tilde{b}_{j,t}^i \leq \sum\nolimits_{l=0}^j \vartheta_{t-j}^i(l),\; j\!\in\!\{1,\ldots,\tau_t^i\},\\\nonumber
&\qquad\sum\nolimits_{l=0}^D \!\theta_t^i(l)\!=\!1, \;\sum\nolimits_{j=0}^{\tau_t^i} \!\tilde{b}_{j,t}^i\!=\!1, \; \sum\nolimits_{j=t+2}^{D} \!\tilde{b}_{j,t}^i\!=\!0, t\!\geq \!k.
\end{align}
where, $\tau_t^i$ and $\tilde{P}_{t}^i$ are similarly defined as in Theorem \ref{thm:impassiveDC}, and
\begin{equation*}
\tilde{b}_{j,t}^i\!=\!\!\Big[[1\!-\!\theta_t^i(0)]\prod\nolimits_{d=1}^{j-1}\prod\nolimits_{l=0}^d [1\!-\!\vartheta_{t-d}^i(l)]\Big]\!\Big[\!\sum\nolimits_{d=0}^j \!\vartheta_{t-j}^i(d)\Big]\!,
\end{equation*}
with $\prod_{d=1}^{0}\prod_{l=0}^d [1-\vartheta_{t-d}^i(l)]\triangleq 1$, for notation convenience. 

Moreover, the optimal resource allocation law is computed online from the following constrained MILP
\begin{align}\label{eq:thm3-res-manager}
&\vartheta_{[k,T-1]}^{\ast}=
\argmin_{\pi_{[k,T-1]}}\frac{1}{N}\!\sum\nolimits_{i=1}^N \sum\nolimits_{t=k}^{T-1}\bigg[\vartheta_t^{i^\top} \Lambda\\\nonumber
&\quad\qquad\;\;+\sum\nolimits_{l=0}^{\tau_{t}^i} \sum\nolimits_{j=l}^{\tau_{t}^i} b_{j,t}^i \textsf{Tr}(\tilde{P}_{t}^i A_{i}^{{l-1}^{\textsf{T}}} \Sigma_{w}^i A_{i}^{l-1})\bigg]\\\nonumber
&\text{s. t.} \quad\; -\alpha_i\leq(\vartheta_t^i-\theta_t^{i,\ast})^\top \Delta\leq \beta_i, \; b_{j,t}^i \;\text{as defined in (\ref{coeff:b})},\\\nonumber
&\qquad \quad\sum\nolimits_{i=1}^N \vartheta_{t}^i(d)\leq c_d,\; \forall d\in\mathcal{D}, \; t\in[k,T-1].
\end{align}
%\begin{align}\label{eq:thm3-res-manager}
%&\vartheta_{[k,T-1]}^{\ast}=
%%&\argmin_{\pi_{[k,T-1]}}\frac{1}{N}\!\sum_{i=1}^N \;[J^{i}(\gamma^{i^\ast}\!\!,\pi_{[k,T-1]})\!-\!J^i(\gamma^{i^\ast}\!\!,\xi_{[k,T-1]}^{i^\ast})]=\\\nonumber
%\argmin_{\pi_{[k,T-1]}}\frac{1}{N}\!\sum_{i=1}^N \sum_{t=k}^{T-1}\left[(\vartheta_t^i-\theta_t^{i,\ast})^\top \Lambda\right]+\\\nonumber
%&\argmin_{\pi_{[k,T-1]}}\frac{1}{N}\!\sum_{i=1}^N \sum_{t=k}^{T-1}\left[\sum_{l=0}^{\tau_{t}^i} \sum_{j=l}^{\tau_{t}^i}\Delta \tilde{b}_{j,t}^i \textsf{Tr}(\tilde{P}_{t}^i A_{i}^{{l-1}^{\textsf{T}}} \Sigma_{w}^i A_{i}^{l-1})\right]\\\nonumber
%&\text{s. t.} \quad\; \sum\nolimits_{i=1}^N \vartheta_{t}^i(d)\leq c_d, \; \forall d\in\!\mathcal{D}, \; t\geq k,\\\nonumber
%&\qquad \quad\Delta \tilde{b}_{j,t}^i=b_{j,t}^i-\tilde{b}_{j,t}^i, \;j\in\mathcal{D}.
%\end{align}
%where,
%\begin{equation*}
%\Delta \tilde{b}_{j,t}^i=(\vartheta_t^i(0)-\theta_t^i(0))\!\left[\prod_{d=1}^{j-1}\prod_{l=0}^d [1-\vartheta_{t-d}^i(l)]\right]\sum_{d=0}^j \vartheta_{t-j}^i(d).
%\end{equation*}
\end{theorem}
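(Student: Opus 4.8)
The plan is to follow the proof of Theorem~\ref{thm:impassiveDC} (Appendix~\ref{Append:thm2}) step by step, changing only what is affected by the delay controller's enlarged information set \eqref{set:DC-information}, which now carries $\vartheta_{[0,k-1]}^i$. First I would fix the plant control to the certainty-equivalence law \eqref{eq:CE-law}--\eqref{eq:CE-gain}: this is legitimate because Theorem~\ref{thm:CE} proves optimality of $\gamma^{i,\ast}$ under Assumption~1 for \emph{every} information set with $\bar{\mathcal{I}}_k^i\subseteq\mathcal{I}_k^i$, which \eqref{set:DC-information} satisfies. Substituting \eqref{eq:CE-law} into $J^i$ and invoking Corollary~\ref{corl:optimal_value_function}, the closed-loop value of $J^i$ splits, by the standard LQG-with-intermittent-observations identity, into a term built only from the prior covariance and the process-noise traces $\Tr(P_t^i\Sigma_w^i)$ --- which does not see the transmission schedule --- plus the communication term $\sum_t\theta_t^{i\top}\Lambda$ (resp.\ $\sum_t\vartheta_t^{i\top}\Lambda$) and the accumulated estimation penalty $\sum_t\E[\|e_t^i\|^2_{\tilde P_t^i}\mid\cdot]$. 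Only the latter two are controllable, and they are precisely what \eqref{eq:thm3-del-ctrl} and \eqref{eq:thm3-res-manager} minimize.

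Second, I would decompose \eqref{prob:global_OP}. By Assumption~1, $\pi_k$ depends on $\tilde{\mathcal{I}}_k$ in \eqref{set:RM-information} only, so the manager's share of \eqref{eq:global_OP} is $\tfrac1N\sum_i J^i(\gamma^{i,\ast},\vartheta^i)$, with $\vartheta$ entering through the realized age of the freshest delivered sample; conversely, because the manager's feasible set \eqref{sensit_constrains}--\eqref{prob:global_OP_last_line} is \emph{centred} at the request $\theta_t^i$, the best a delay controller can do for the social cost is to submit the request that is optimal for its own resource-unlimited cost $J^i(\gamma^{i,\ast},\xi^i)$: any other request only shifts the centre of subsystem~$i$'s admissible window away from its own optimum without improving the trade-offs the manager makes among the other subsystems. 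This person-by-person argument splits \eqref{prob:global_OP} into the $N$ local delay problems and the single central allocation problem, exactly as for the impassive case.

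Third, I would evaluate the penalty. If the freshest sample available at time $t$ is $j$ steps old, then $e_t^i=\sum_{l=0}^{j-1}A_i^{l}w_{t-1-l}^i$, so its $\tilde P_t^i$-weighted mean-square value is $\sum_{l=0}^{j-1}\Tr(\tilde P_t^i A_i^{l}\Sigma_w^i A_i^{l\top})$; weighting this by the indicator that the freshest age equals $j$ and summing reproduces, after the reindexing used in the statement, the double sum in \eqref{eq:thm3-del-ctrl}. The one genuinely new feature relative to Theorem~\ref{thm:impassiveDC} is the \emph{form} of that indicator: since $\vartheta_{[0,k-1]}^i$ is known, for the stage being decided the age indicator factors as $[1-\theta_t^i(0)]$ --- the only freshly chosen binary able to make $x_t^i$ arrive at $t$ --- times products of the \emph{known} $[1-\vartheta_{t-d}^i(l)]$, $d\ge1$, times $\sum_d\vartheta_{t-j}^i(d)$, i.e.\ the coefficient $\tilde b_{j,t}^i$, whereas $\bar b_{j,t}^i$ of Theorem~\ref{thm:impassiveDC} used $\theta$ in place of $\vartheta$ throughout. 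Treating the $\tilde b_{j,t}^i$ as auxiliary variables, the multilinear identity is equivalent to the linear system $\tilde b_{0,t}^i=\theta_t^i(0)$, $\tilde b_{j,t}^i\le\sum_{l=0}^j\vartheta_{t-j}^i(l)$, $\sum_l\theta_t^i(l)=1$, $\sum_j\tilde b_{j,t}^i=1$, $\sum_{j=t+2}^D\tilde b_{j,t}^i=0$, the relaxation being tight at the optimum since the coefficients $\sum_{l=0}^{j-1}\Tr(\tilde P_t^i A_i^{l}\Sigma_w^i A_i^{l\top})$ are non-decreasing in $j$, so the minimizer picks the smallest feasible age. This yields \eqref{eq:thm3-del-ctrl}. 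The resource manager sees the same penalty but with the true coefficients $b_{j,t}^i$ of \eqref{coeff:b} (written in $\vartheta$), averaged over $i\in\mathrm{N}$ and subject to \eqref{sensit_constrains}--\eqref{prob:global_OP_last_line}, which is \eqref{eq:thm3-res-manager}; the asymmetry that the manager may use $\theta_{[0,k]}$ while the controller at $k$ uses only $\vartheta_{[0,k-1]}$ simply reflects the order of variable generation in Fig.~\ref{fig:info-topology}.

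The step I expect to be the main obstacle is showing that the \emph{online} (receding-horizon) solution of \eqref{eq:thm3-del-ctrl} --- of which only $\theta_k^{i,\ast}$ is implemented before $\vartheta_k^i$ is revealed and the program re-solved --- actually realizes the globally optimal reactive policy. Since $\theta_{k+1}^i$ may itself depend on $\vartheta_{[0,k]}^i$, one must formulate the delay controller's problem as dynamic programming over the ``age-of-information'' variable, verify that this variable is a controlled Markov chain driven by $\theta_t^i(0)$ and the revealed $\vartheta$'s, and conclude that the additive penalty makes the one-stage MILP the exact Bellman minimizer at each $k$; only then does deferring to later information gain nothing, and only then is the person-by-person decomposition of the second step unaffected by the reactive coupling. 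The remaining manipulations are identical to those in Appendix~\ref{Append:thm2}.
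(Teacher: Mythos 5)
Your proposal follows essentially the same route as the paper: the paper in fact omits a detailed proof of Theorem~\ref{thm:reactiveDC}, stating only that it ``follows similarly to that of Theorem~\ref{thm:impassiveDC}'' with the differences (per-time-step online MILPs, and $\tilde b_{j,t}^i$ replacing $\bar b_{j,t}^i$ because $\vartheta_{[0,k-1]}^i\in\bar{\mathcal I}_k^i$) collected in Remark~\ref{remark:3} --- which is precisely the substitution you carry out starting from the decomposition \eqref{eq:global_cost_separation} and the error-covariance computation of Appendix~\ref{Append:thm2}. The receding-horizon optimality subtlety you flag at the end is genuine but is not addressed in the paper either, so your treatment is, if anything, more explicit than the published one.
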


\vspace{1mm}
\begin{proof}
Derivation of optimal policies in Theorem \ref{thm:reactiveDC} follows similarly to that of Theorem \ref{thm:impassiveDC} and hence omitted. The major differences are summarized in the Remark \ref{remark:3}.
\end{proof}
\begin{remark}\label{remark:3}
In Theorem \ref{thm:reactiveDC}, the reactive delay controller is aware of $\vartheta_{[0,k-1]}^{i,\ast}$ and incorporates them in deciding~$\theta_{[k,T-1]}^{i,\ast}$. Hence, unlike Theorem \ref{thm:impassiveDC}, here we solve a per-time-step MILP. Technically, the online nature of the MILP (\ref{eq:thm3-del-ctrl}) is reflected in the time-varying $\tilde{b}_{j,t}^i$ that results in a time-varying $\theta_{[k,T-1]}^{i,\ast}$. Comparing it with $\bar{b}_{j,t}^i$ in Theorem \ref{thm:impassiveDC}, we see that for each time $k$, $\theta_{[k,T-1]}^{i,\ast}$ depends on $\vartheta_{[k-D,k-1]}^{i,\ast}$, while in Theorem \ref{thm:impassiveDC} the same decision was dependent only on $\theta_{[k-D,k-1]}^{i,\ast}$. The MILP problem (\ref{eq:thm3-res-manager}) also becomes online as it needs to satisfy the time-varying constraint $-\alpha_i\leq(\vartheta_t^i-\theta_t^{i,\ast})^\top \Delta\leq \beta_i$. %Hence, we need to derive optimal $\theta_{[k,T-1]}^{i,\ast}$ and $\vartheta_{[k,T-1]}^{i,\ast}$ at every time-step $k$.
\end{remark}

\begin{remark}\label{rem:complexity}
The optimal impassive delay control and resource allocation variables $(\theta_{[0,T-1]}^\ast,\vartheta_{[0,T-1]}^\ast)$ are shown in Theorem~\ref{thm:impassiveDC} to be offline solutions of the MILPs (\ref{eq:opt-imp-del-cont}) and (\ref{eq:opt-imp-res-alloc}), while the same variables of the reactive approach are solved online from the MILPs (\ref{eq:thm3-del-ctrl}) and (\ref{eq:thm3-res-manager}), as in Theorem~\ref{thm:reactiveDC}. 
Based on their formulations, the impassive approach requires an MILP of complexity $\mathcal{O}(NdT)$ whereas the reactive approach requires an MILP of complexity $\mathcal{O}(NdT^2)$. This confirms that both approaches incur linear complexity growth w.r.t. to the number of sub-systems and the number of transmission links. However, complexity of the reactive approach grows quadratically with the time horizon length while the respective growth rate for the impassive approach is linear\footnote{A less complex scenario can be discussed when the control systems decide on a desired transmission link not for a single time-step but for a time interval. The joint optimal solution for such a scenario can be derived similar to the results of this article yet the computational complexity is~reduced. The social cost, however, will be higher as constraints are per interval.}.
\end{remark}

\begin{remark}
According to (\ref{eq:estimator_dynamics}), the state estimation at the controller is performed using the freshest received state information, hence, if an outdated state arrives while a fresher one is available, the former will not be used. In addition, both local and social objective functions (\ref{eq:local_objective})-(\ref{eq:global_OP}) include communication costs. Therefore, to reduce the total cost, the delay controllers and the resource manager try to avoid transmission decisions that lead to out of order delivery of state information. This is reflected in the formulated MILPs in Theorems \ref{thm:impassiveDC} and~\ref{thm:reactiveDC}. This is, however, unavoidable due to the constraint (\ref{const1}) that forces each sub-system to select one delay link $\ell_d\!\in\!\mathcal{L}$ while the maximum delay $D$ is finite. Intuitively, many of transmissions with $D$-step delay would not have been executed if the sub-systems had the option to remain open-loop and select \textit{no transmission}. Hence, outdated information appearing at subsequent time-steps are discarded if a fresher data exists.%\footnote{The approaches presented in this paper can be extended to the case that $D\rightarrow \infty$ and $\lambda_D\rightarrow 0$.}.

\end{remark}

Corollary \ref{corol:performance-comparison} below shows that, although the reactive approach requires more computation, it outperforms the impassive approach in terms of both local and social performances.

\begin{corollary}\label{corol:performance-comparison}
Let the performance of the local policy co-design $(\gamma^{i,\ast},\xi^{i,\ast},\pi^\ast)$ for the impassive and reactive approaches be denoted, respectively, by $J^{i,\ast}_{\text{Im}}$ and $J^{i,\ast}_{\text{Re}}$, defined in (\ref{eq:local_objective}), and also denote the social performance of the overall joint design $(\gamma^{\ast},\xi^{\ast},\pi^\ast)$ by $J^\ast_{\text{Im}}$ and $J^\ast_{\text{Re}}$, defined in (\ref{eq:global_OP}). Let $\gamma^{i,\ast}$, $\xi^{i,\ast}$ and $\pi^\ast$ of the impassive approach be computed as (\ref{eq:CE-law}), (\ref{eq:opt-imp-del-cont}) and (\ref{eq:opt-imp-res-alloc}), and of the reactive approach as (\ref{eq:CE-law}), (\ref{eq:thm3-del-ctrl}) and (\ref{eq:thm3-res-manager}), respectively. Then, $J^{i,\ast}_{\text{Re}}\leq J^{i,\ast}_{\text{Im}}$ and $J^\ast_{\text{Re}}\leq J^\ast_{\text{Im}}$.
\end{corollary}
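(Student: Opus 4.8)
The plan is to establish the inequalities by an information-inclusion argument: the reactive delay controller uses a strictly larger information set than the impassive one, so any policy available to the impassive controller is also available to the reactive controller, and the optimal reactive cost cannot be larger. First I would fix attention on a single sub-system $i$ and observe that, by definition, the impassive information set $\bar{\mathcal{I}}_k^i$ in (\ref{set:DC-information-impassive}) satisfies $\bar{\mathcal{I}}_k^i\subseteq\bar{\mathcal{I}}_k^i$ of (\ref{set:DC-information}), the reactive set, for every $k$, since the latter additionally contains $\vartheta_{[0,k-1]}^i$. Consequently the class of admissible reactive delay-control policies $\{\xi_k^i:\text{measurable w.r.t. }\sigma(\bar{\mathcal{I}}_k^i\text{ reactive})\}$ contains, as a subset, the image of every impassive policy: given an impassive $\xi_k^i$, the same mapping — simply ignoring the extra coordinates $\vartheta_{[0,k-1]}^i$ — is a valid reactive policy producing an identical realization of $\theta_{[0,T-1]}^i$ along every sample path. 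Because the plant controller is certainty-equivalence and identical in both cases by Theorem \ref{thm:CE}, and because the resource allocation $\pi^\ast$ in both Theorem \ref{thm:impassiveDC} and Theorem \ref{thm:reactiveDC} is the minimizer over the \emph{same} feasible set (\ref{sensit_constrains})–(\ref{prob:global_OP_last_line}) given the realized $\theta_{[0,T-1]}^{i,\ast}$, feeding the reactive machinery this "lifted impassive" policy reproduces exactly $J_{\text{Im}}^{i,\ast}$ and $J_{\text{Im}}^\ast$. Minimizing the reactive cost over its strictly larger admissible class can therefore only decrease the value, giving $J_{\text{Re}}^{i,\ast}\le J_{\text{Im}}^{i,\ast}$ and, after averaging over $i$ and subtracting the common resource-unlimited benchmark $\min_{u^i,\theta^i}J^i$ as in (\ref{eq:global_OP}), $J_{\text{Re}}^\ast\le J_{\text{Im}}^\ast$.

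A slightly more careful version keeps track of the fact that the reactive delay controller re-optimizes at each time $k$ over the horizon $[k,T-1]$ while the impassive one commits offline over $[0,T-1]$; here I would invoke the dynamic-programming principle. Writing the reactive problem as a backward recursion in which, at stage $k$, one minimizes the expected cost-to-go $V_k^{i,\ast}$ of Corollary \ref{corl:optimal_value_function} over $\theta_k^i$ given $\bar{\mathcal{I}}_k^i$ (reactive), and noting that the impassive solution is one particular feasible non-anticipative policy in this recursion (its decisions depend only on $\theta_{[0,k-1]}^i$, a sub-$\sigma$-algebra of the reactive one), the Bellman optimality inequality immediately yields the per-stage and hence the terminal comparison. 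The key structural fact making this work is the one highlighted in Remark \ref{remark:3}: the reactive coefficients $\tilde b_{j,t}^i$ collapse to the impassive coefficients $\bar b_{j,t}^i$ precisely when $\vartheta_{[k-D,k-1]}^{i}=\theta_{[k-D,k-1]}^{i}$, so along any sample path where no reallocation has occurred the two cost-to-go functions agree, and where reallocation has occurred the reactive controller has \emph{additional} true information about the realized delays and can only do at least as well.

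The main obstacle I anticipate is ruling out the possibility that the larger reactive information set, while helping sub-system $i$ locally, could be exploited by the resource manager in a way that hurts the social cost — i.e. confirming that the social inequality genuinely follows and is not an artifact of a mismatched benchmark. This is handled by observing that the subtracted term $\min_{u^i,\theta^i}J^i(u^i,\theta^i)$ in (\ref{eq:global_OP}) is the \emph{same} resource-unlimited constant in both scenarios (it does not depend on the awareness model, only on the plant data $\{A_i,B_i,Q_1^i,Q_2^i,R^i,\Sigma_w^i,\Sigma_{x_0}^i,\Lambda\}$), so the social comparison reduces cleanly to $\sum_i \E[J^i(u^{i,\ast},\vartheta^{i,\ast})]$, which is exactly what the resource-manager MILPs (\ref{eq:opt-imp-res-alloc}) and (\ref{eq:thm3-res-manager}) minimize; since the reactive manager minimizes over a feasible region that contains — via the lifted impassive policy — the impassive manager's achieved point, the social inequality follows. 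A secondary subtlety is the nested argmin structure (delay controllers optimize $J^i$, manager optimizes the average), but because Assumption 1 renders $\pi_k$ independent of the control laws and the delay-controller problems (\ref{eq:opt-imp-del-cont}) and (\ref{eq:thm3-del-ctrl}) are decoupled across $i$, the inclusion argument applies verbatim to each layer separately and then composes.
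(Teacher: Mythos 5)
Your proposal is correct and follows essentially the same route as the paper's own proof: both rest on the observation that the impassive information set (\ref{set:DC-information-impassive}) is contained in the reactive one (\ref{set:DC-information}), so the impassive optimal delay/allocation pair is feasible for the reactive MILPs and the reactive optimum can only be at least as good. The only cosmetic difference is that you handle the social inequality by a direct feasibility-plus-common-benchmark argument, whereas the paper argues the same point by contradiction; the substance is identical.
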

\begin{proof}
See the Appendix \ref{Append:corol3}.
\end{proof}

\subsection{Optimal resource allocation without model awareness} \label{sec:no_model_awareness}
 In an NCS, the individual entities may not be willing to share the specifications of their dynamical model or their objective functions with the communication service provider. Within our problem formulation, this essentially means that the network manager does not have the knowledge of constant parameters $\{A_i, B_i, Q_1^i, Q_2^i, R^i, \Sigma_w^i,\Sigma_{x_0}^i\}$, $i\in \mathrm{N}$. Technically, having no knowledge of the constant parameters (except $\alpha_i,\beta_i$) the local cost functions $J^i$ are not computable for the network manager, hence the optimal resource allocation policy cannot be obtained from the problem (\ref{prob:global_OP}). More precisely, although the local policies $\gamma^{i,\ast}$'s and $\xi^{i,\ast}$'s can still be computed from (\ref{eq:CE-law}), (\ref{eq:opt-imp-del-cont}), and (\ref{eq:thm3-del-ctrl}), for impassive and reactive approaches, respectively, $\pi^\ast$ cannot be obtained from the either problems (\ref{eq:opt-imp-res-alloc}) and (\ref{eq:thm3-res-manager}). Let the information set $\tilde{\mathcal{I}}_k$ the network manager be defined as in (\ref{set:RM-information}) but excluding the knowledge of the constant parameters of all sub-systems except $\alpha_i,\beta_i$'s. Then the best the network manager can perform is to allocate resources such that, given $\alpha_i, \beta_i$'s, the average deviation between the delay control and resource allocation decisions is minimized, which is the first term in the MILPs (\ref{eq:opt-imp-res-alloc}) and (\ref{eq:thm3-res-manager}).  %Define $d^{\;\text{DC}}_{i,k}\!\in\!\{0,\ldots,D\}$ and  $d^{\;\text{NM}}_{i,k}\!\in\!\{0,\ldots,D\}$ as selected transmission links $\ell_{d^{\;\text{DC}}_{i,k}}$ and $\ell_{d^{\;\text{NM}}_{i,k}}$, respectively, by the delay controller and network manager, such that $\theta_k^i(d^{\;\text{DC}}_{i,k})\!=\!1$ and $\vartheta_k^i(d^{\;\text{NM}}_{i,k})\!=\!1$. Define also $d^{\;\text{NM}}_{[k,T-1]}\triangleq[d^{\;\text{NM}}_{1,[k,T-1]},\ldots,d^{\;\text{NM}}_{N,[k,T-1]}]$ representing the selected transmission links by the network manager for all sub-systems over the time period $[k,T-1]$. 
Hence, the optimal resource allocation for the impassive approach will be obtained from
\begin{align}\label{prob:res_alloc_no_knowledge_imp}
&\vartheta_{[0,T-1]}^{\ast}=\argmin_{\pi_{[0,T-1]}}\frac{1}{N}\sum\nolimits_{i=1}^N \sum\nolimits_{t=0}^{T-1}\vartheta_t^{i^\top} \!\Lambda\\\nonumber
&\text{s. t.} \quad\; -\alpha_i\leq(\vartheta_t^i-\theta_t^{i,\ast})^\top \Delta\leq \beta_i, \; i\in\mathrm{N},\\\nonumber
&\qquad \quad\sum\nolimits_{i=1}^N \vartheta_{t}^i(d)\leq c_d,\; \forall d\in\mathcal{D}, \; t\in[0,T-1],
\end{align}
and for the reactive approach, is obtained from
\begin{align}\label{prob:res_alloc_no_knowledge_re}
&\vartheta_{[k,T-1]}^{\ast}=\argmin_{\pi_{[k,T-1]}}\frac{1}{N}\sum\nolimits_{i=1}^N \sum\nolimits_{t=k}^{T-1}\vartheta_t^{i^\top}\Lambda\\\nonumber
&\text{s. t.} \quad\; -\alpha_i\leq(\vartheta_t^i-\theta_t^{i,\ast})^\top \Delta\leq \beta_i,\; i\in\mathrm{N},\\\nonumber
&\qquad \quad \sum\nolimits_{i=1}^N \vartheta_t^i(d)\leq c_d, \; \forall d\in\mathcal{D}, \; t\in [k,T-1],
\end{align}
where, $\theta_t^{i,\ast}$ in (\ref{prob:res_alloc_no_knowledge_imp}) is the solution of the impassive approach (\ref{eq:opt-imp-del-cont}), while in (\ref{prob:res_alloc_no_knowledge_re}) is solution of the reactive approach (\ref{eq:thm3-del-ctrl}).

From (\ref{prob:res_alloc_no_knowledge_imp}) and (\ref{prob:res_alloc_no_knowledge_re}), in the absence of the constant model parameters the resource manager only optimizes the communication cost, and that the allocated resource to remain within the sensitivity constraint (\ref{sensit_constrains}). This results in a solution for $\vartheta$ that tends to select the transmission links that incur the least communication cost ignoring that such selections may severely affect the control cost.
To counter that, in the reactive approach where the delay controller can adjust its link selection profile in response to the resource allocation policy, each system changes their $\theta^{i,\ast}_k$ drastically for the future time-steps to request for faster links aiming to reduce the control cost. 
Assume a system asked for a fast link, e.g. with delay zero, due to its task criticality, however, the network manager does not realize the urgency due to not being capable of estimating the control cost and allocates a higher latency transmission link (say $d\!=\!2$) which optimizes only the communication cost. The system will then be forced to select a low delay link again since its past request is not served accordingly. This approach thus leads to higher total cost of control and communication compared to the scenario that the resource manager knows the constant model parameters. 
Furthermore, when constant model parameters are assumed unknown, the reactive approach performs significantly better than its impassive counterpart since the systems will be generally unhappy of this agnostic resource allocation, hence respond with a significantly different $\theta_k^\ast$ than the prescribed $\vartheta_k^\ast$ that leads to a very different $\vartheta^\ast_{k+1}$ than $\vartheta^\ast_k$.

\subsection{Delay-insensitive optimal resource allocation}\label{sec:weighted_cost}
%To the best of our knowledge, none of the existing literature considers a joint networking and control design that explicitly considers network latency in NCSs. 
For the purpose of benchmarking and comparing the two methods presented in the previous sections, we propose another ad-hoc approach by extending the work of \cite{8405590} to a multi-agent scenario.
More specifically, the approach presented in this section adopts a formulation that does not consider the delay sensitivity in the formulation, rather solely interested in the capacity constraint.
%In the previous sections we described two practical approaches for impassive and reactive optimal resource allocations based on whether the network manager is aware of the constant model parameters or not.
%The reactive approaches in these formulations considered the coupling between $\theta_t$ and $\vartheta_t$ to ensure that a sub-system is capable of adjusting its link selection profile in case the resource allocation assigns delay links that differ \textit{too much} from its desired demand. 
%In this section, and for the purpose of comparisons, we describe another approach that adopts a different formulation in assigning the delay links by relaxing the resource manager to take into account the delay sensitivities of individual sub-systems. 
This means that the resource manager ignores the knowledge of $\theta_{[0,k]}^i$ and $\{\alpha_i,\beta_i\}$'s, $i\!\in\!\mathrm{N}$, however, knows the constant model parameters of all sub-systems.
We define constant weights $w_i \!>\!0$ such that $\sum_{i=1}^N w_i \!=\!1$. The network manager then prioritizes each sub-system based on $w_i$ and optimizes the MILP at every time-step $k$, i.e.,
\begin{align} \label{prob:res_alloc_weighted_re}
&\vartheta_{[k,T-1]}^{\ast}\!=\argmin_{\pi_{[k,T-1]}}\sum_{i=1}^N w_i\E\!\bigg[\!V^{i,\ast}_k(\gamma^{i,\ast}\!,\pi^i)\!+\!\!\sum_{t=k}^{T-1}\vartheta_t^{i^\top}\!\Lambda\big|\tilde{\mathcal{I}}_k\bigg]\!=\nonumber \\\nonumber
&\argmin_{\pi_{[k,T-1]}}\!\sum_{i=1}^N\sum_{t=k}^{T-1}\!w_i\bigg[\vartheta_t^{i^\top} \!\!\Lambda\!+\!\sum_{l=0}^{\tau_{t}^i} \sum_{j=l}^{\tau_{t}^i} b_{j,t}^i \textsf{Tr}(\tilde{P}_{t}^i A_{i}^{{l-1}^{\textsf{T}}} \Sigma_{w}^i A_{i}^{l-1})\bigg]\!\\
&\text{s. t.} %\quad\; -\alpha_i\leq(\vartheta_t^i-\theta_t^{i,\ast})^\top \mathcal{D}\leq \beta_i,\\\nonumber
\quad \sum\nolimits_{i=1}^N \vartheta_{t}^i(d)\leq c_d,\; \forall d\in\mathcal{D}, \; t\in[k,T-1].
\end{align}

Notice that since there is no coupling between $\vartheta_t$ and $\theta_t$ contrasting to the formulations in \eqref{eq:thm3-res-manager} and \eqref{prob:res_alloc_no_knowledge_re}, $\vartheta^{\ast}_{[k,T-1]}$ can be found from $\vartheta^{\ast}_{[0,T-1]}$ without solving \eqref{prob:res_alloc_weighted_re} for all $k$. 
In fact if $\vartheta^{\ast}_{[0,T-1]}$ is the solution of \eqref{prob:res_alloc_weighted_re} for $k=0$, then the part $\vartheta^{\ast}_{[t,T-1]}$ of $\vartheta^{\ast}_{[0,T-1]}$ is the solution of \eqref{prob:res_alloc_weighted_re} for any $k=t$.
Furthermore, any feasible solution of \eqref{eq:thm3-res-manager} is a feasible solution for \eqref{prob:res_alloc_weighted_re}, and hence, often the delay-insensitive approach results in a lower social cost than the delay-sensitive MILP in \eqref{eq:thm3-res-manager}.
However, the lower social cost in this approach is obtained at the expense of higher deviations between the desired links and the allocated ones since no constraint of the form $-\alpha_i\leq(\vartheta_t^i-\theta_t^{i,\ast})^\top \Delta\leq \beta_i$ exists to restrict the deviation between $\vartheta_t^i$ and $\theta_t^{i,\ast}$. 
Hence, the social performance is expected to improve, however, certain individual sub-systems suffer as their link allocation is far from the ones requested. This trade-off needs to be attended for the resource manager to be sufficiently responsive to timeliness sensitivity of local sub-systems.
\section{SIMULATION RESULTS}\label{num_res}
We consider an NCS consisting of 10 homogeneous stable and 10 homogeneous unstable sub-systems. The system and input matrices for the unstable and stable groups are $A^u\!=\!\begin{bmatrix} 1.01 & \!\!0.2\\ 0.2 & \!\!1 \end{bmatrix}$, $A^s\!=\!\begin{bmatrix} 0.5 & \!\!0.1\\ 0.6 & \!\!0.8 \end{bmatrix}$, and $B^u\!=\!B^s\!=\!\begin{bmatrix} 0.1 &\!\!0 \\ 0 &\!\! 0.15 \end{bmatrix}$, respectively. The disturbance is Gaussian distributed with mean and variance as $\mathcal{N}(0,1.5I_2)$. The LQG cost parameters for all sub-systems are identically set as $Q^i_1\!=\!Q^i_2\!=\!R^i\!=\!I_2$, and $T\!=\!20$ is the total time horizon of the simulations. 

The network supports the control loops via~$6$ transmission links with delays of $d\!\in\![0,1,2,3,4,5]$ time-steps associated with the cost $\Lambda\!=\![25,17,11,7,4,1]$.
We assume $c_d\!=\!6$, $\forall d$, and $\alpha_i\!=\!\beta_i\!=\!3$, $\forall i\!\in\!\{1,\ldots, 20\}$. Note that $c_d\!=\!6$ satisfies the individual and total capacity constraints (\ref{link-capacity-constraint})~and~(\ref{tot-transmission}), however, does not meet the sufficient feasibility condition~(\ref{feasibility}) for $d\!=\!\{0,5\}$\footnote{According to (\ref{feasibility}), $c_d\!\geq \!10$ for $d\!=\!\{0,5\}$ and $c_d\!\geq \!6$ for $d\!=\!\{1,2,3,4\}$.} and yet is a valid choice for this simulation setup, which shows (\ref{feasibility}) is not a necessary condition.
\begin{figure}
\centering
\includegraphics[trim=10 10 20 20, clip,width=.91 \linewidth]{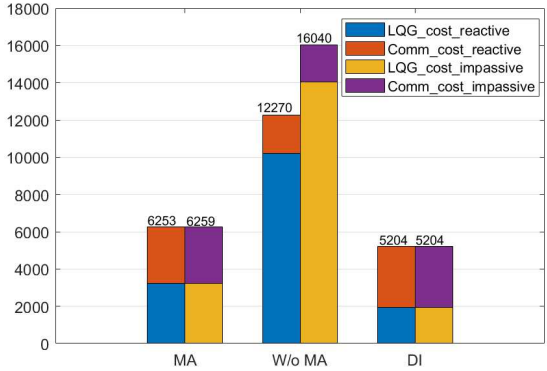}\vspace{-2mm}
\caption{Optimal costs for different approaches. MA: with model awareness (Sec.~\ref{sec:optimal-delay-model-aware}), W/o MA: without model awareness (Sec.~\ref{sec:no_model_awareness}), DI: delay-insensitive (Sec.~\ref{sec:weighted_cost}).} \label{fig:cost_comparison}
\vspace*{-4mm}
\end{figure}

We illustrate the optimal delay control and link allocation for each sub-system using the discussed approaches: 1) with model awareness, 2) without model awareness, and 3) delay-insensitive approach, as presented in sections \ref{sec:optimal-delay-model-aware}, \ref{sec:no_model_awareness}, and \ref{sec:weighted_cost}, respectively.
For the first two approaches, we employ both \textit{reactive} and \textit{impassive} methods to perform optimal co-design and compare their outcomes.
As discussed in Corollary \ref{corol:performance-comparison}, we demonstrate that the reactive method performs no worse than the impassive method and may often perform significantly better, due to the dynamic coupling between $\theta$ and $\vartheta$. Since such coupling does not exist in the delay-insensitive case, reactive and impassive methods yield identical results.

In Fig. \ref{fig:cost_comparison}, we illustrate the LQG control and communication costs for the above-mentioned approaches, where we see that the awareness of the constant model parameters leads to a significant performance improvement when compared with no model awareness scheme.
However, as also discussed in Section \ref{sec:no_model_awareness}, the superiority of the reactive approach over the impassive counterpart is far better for the case without model awareness. 
In fact, one needs to contemplate whether to employ the reactive approach when the network manager has access to the constant model parameters, due to the insignificant overall performance augmentation at the expense of the extra computational complexity (see Remark~\ref{rem:complexity}).

%The average deviation in theta, as defined in \eqref{E:del_avg}, is shown in Fig. \ref{fig:inhomo_avg_deviation}. 
%\begin{figure}[h]
%    \centering\vspace{-1mm}
%    \includegraphics[width=7cm, height=4.5cm]{avg_theta_variation_inhomogeneous.eps}
%    \vspace{-1.5mm}\caption{$\Delta^{\textrm{avg}}_i(t)$ vs. $t$ for all agents}
%    \label{fig:inhomo_avg_deviation}
%        \vspace{-3mm}
%\end{figure}
%Comparing with Fig. \ref{fig:avg_deviation}, in Fig. \ref{fig:inhomo_avg_deviation} we notice two distinct groups. 
%The average deviation of the stable group is different from that of the unstable group. 
%In fact, the stable group has more deviation than the unstable group. 
%The reason for this is because the stable group prefers to use links $5$ over faster links, and due to constraints, they are forced to use faster links that are mostly unused such as links $2$ and~$3$.
%
%The new costs are plotted in Fig. \ref{fig:costs_inhomo}, and we also witness the similar clustering behavior. 
%The left cluster denotes the stable group with the black star representing the optimal cost of a stable agent. 
%The right cluster denotes the unstable group with the red star representing the optimal cost of an unstable agent.
%\begin{figure}[]
%    \centering\vspace{-4mm}
%    \includegraphics[width=7cm, height=4.5cm]{cost_profile_inhomogeneous.eps}
%    \vspace{-1.5mm}\caption{LQG and communication cost per system. Black dot represents the optimal LQG and communication cost for a stable system and the red dot for unstable system.}
%    \label{fig:costs_inhomo}
%    \vspace{-4mm}
%\end{figure}

Fig.~\ref{fig:utimulti_inhomo} shows the transmission link utilization profile (defined in \ref{eq:link_utilization_numerics}) where we only provide the plot for the impassive and reactive scenarios when the network manager is not aware of the constant model parameters (Section \ref{sec:no_model_awareness}).
\begin{align}\label{eq:link_utilization_numerics}
    \rho_i(t) =\frac{\# \text{ of utilization of Link } i \text{ up to time }t}{N(t+1)}.
\end{align}
\begin{figure}[t]
    \centering
    \includegraphics[trim=10 20 15 25, clip, width=1.00 \linewidth]{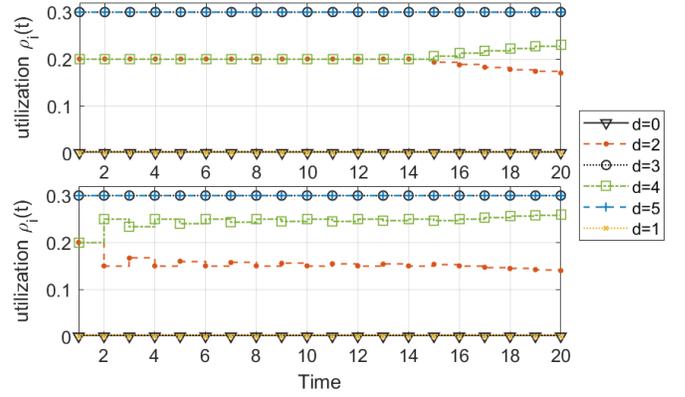}
    \vspace{-5mm}\caption{Link utilization over time under capacity constraints without model awareness based approach.
    Top: reactive method, bottom: impassive method.}
    \label{fig:utimulti_inhomo}
        \vspace{-4mm}
\end{figure}
According to \eqref{eq:link_utilization_numerics}, $\sum_{i=1}^N \rho_i(t)=1$ at every time $t$, that is also reflected in Fig.~\ref{fig:utimulti_inhomo}. For the case without model awareness, the network manager only cares about the communication cost and hence the cheaper links are utilized, as can be seen in Fig. \ref{fig:utimulti_inhomo}. 
Notice that link 3 is used more than link 4 due to the coupling constraints between $\theta_t$ and $\vartheta_t$ in \eqref{prob:res_alloc_no_knowledge_imp} and \eqref{prob:res_alloc_no_knowledge_re}. 
The sub-systems which requested for the link $\ell_0$, can not be assigned to any link beyond $\ell_3$ since $\beta_i\!=\!3$. 
Thus, the majority of the requests for link $\ell_0$ were assigned to $\ell_3$ and the rest were assigned to $\ell_2$ ($\ell_1$ is more expensive).
Similarly, the majority of the requests for $\ell_5$ are assigned to $\ell_5$ and the rest to $\ell_4$, etc.

We also studied this problem for the case with model awareness, and we noticed that the difference in the link utilization is minor between the two impassive and reactive approaches (as also corroborated by the cost difference in Fig. \ref{fig:cost_comparison}). 
In fact, the link utilization, in this case, changes only after time $t\!=\!15$.
This observation brings out the question whether it makes sense to adopt the computationally expensive reactive approach over the simple impassive approach for this little improvement. 
%Although, the margin of improvement of the reactive approach over the impassive depends on various parameters such as the number of agents, the communication cost, and the system matrices, however, the  improvement would not be significant. 
Based on this observation, one may be tempted to adopt reactive approach in an intermittent fashion, i.e., instead of solving \eqref{eq:thm3-res-manager} for every $k$, do so at $k\!=\!t_1,t_2,\ldots,t_\ell$ where $0\!<\! t_1 \!<\!\ldots\!<\!t_\ell\!<\!T$.
An interesting yet challenging research question is how to determine $t_1,\ldots,t_\ell$. One may perhaps adopt an event-based strategy to solve for these quantities, we, however, leave this as a future research.

Next we study the average deviation between the requested $\theta^\ast$ and the allocated $\vartheta^\ast$, computed by the following formula
\begin{align} \label{E:avg_deviation}
    \Delta_{i}(t)=\frac{\sum_{i=1}^N\sum_{k=0}^t|(\vartheta^{i,\ast}_k-\theta^{i,\ast}_k)^\top \Delta|}{N(t+1)}.
\end{align}
We report the average deviation result for all three approaches in Fig. \ref{fig:group_avg_deviation_model}. 
The figure also shows that the average deviation is generally higher for the delay-insensitive approach compared to both delay-sensitive scenarios of reactive and impassive, confirming the explanations in the Section~ \ref{sec:weighted_cost}.

\begin{figure}[t]
    \centering
    \includegraphics[trim=10 0 10 20, clip,width=.935 \linewidth]{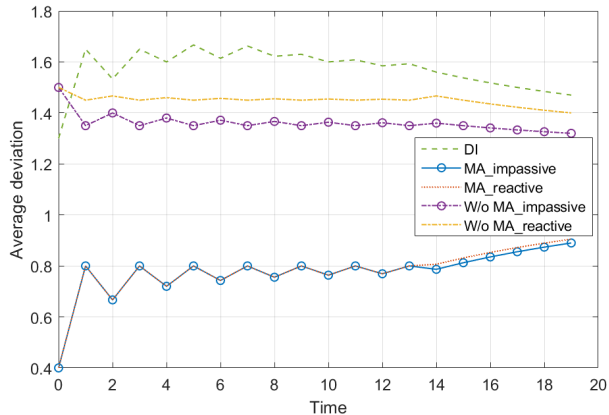}\vspace{-1mm}
    \caption{Average deviation in the allocated links as computed by \eqref{E:avg_deviation}.
    }
    \label{fig:group_avg_deviation_model}
    \vspace{-5.5mm}
\end{figure}

%\section{Conclusion}
%A conclusion section is not required. Although a conclusion may review the 
%main points of the paper, do not replicate the abstract as the conclusion. A 
%conclusion might elaborate on the importance of the work or suggest 
%applications and extensions. 

\section{CONCLUSION}\label{conclusion}\vspace{-1mm}

In this article, we address the problem of jointly optimal control and networking for multi-loop NCS exchanging data over a shared communication network that offers a range of capacity-limited, latency-varying and cost-prone transmission services. We investigate different awareness scenarios between the cross-layer decision makers and study the effects of the resulting interactions on the structure of the optimal policies. 
By formulating a system (social) optimization problem, we derive the joint optimal policies under various cross-layer awareness models of constant parameters and dynamic variables. We show that higher awareness leads to better social performance, however, results in more complex optimization problems. In addition, we discuss that tighter sensitivity w.r.t. the deviations from the desired local decision variables may lead to better local performance for certain systems, however, in a constrained setup where multiple systems are competing for limited resources, results in higher cost for other systems and eventually degrades the social performance. 
The proposed design approach is implemented on a multi-loop NCS where the simulation observations validate our theoretical results.

\bibliographystyle{ieeetr}
\bibliography{TCNS}

\appendices
\section*{Appendix}

\subsection{Proof of Theorem \ref{thm:CE}}\label{Append:thm1}

\begin{proof}
%The information sets $\mathcal{I}_k^i$, $\tilde{\mathcal{I}}_k$ and $\bar{\mathcal{I}}_k^i$ in~(\ref{set:controller-information})-(\ref{set:DC-information}) imply that
To compute $\vartheta_k^i$, the resource manager has no knowledge of $u_{[0,k-1]}^i$, but incorporates $\theta_{[0,k]}^i$'s, $i\!\in\!\mathrm{N}$, via $\tilde{\mathcal{I}}_k$. The controller $\mathcal{C}_i$ knows about~$u_{[0,k-1]}^i$, $\vartheta_{[0,k]}^i$ and $\theta_{[0,k]}^i$ via $\mathcal{I}_k^i$, while $u_{[0,k-1]}^i$, $\vartheta_{[0,k-1]}^i$ and $\theta_{[0,k-1]}^i$ are known for the delay controller via $\bar{\mathcal{I}}_k^i$. From (\ref{eq:local_objective})-(\ref{eq:global_OP}), we re-state (10) as
\begin{align}\label{prob:global_OP_outer}
&\min_{\gamma^i,\xi^i,\pi} \!J\!=\!\frac{1}{N}\sum\nolimits_{i=1}^N \E\bigg[\min_{\gamma^i,\pi}J^i(u^i,\vartheta^i)\;-\\\nonumber
&\min_{\gamma^i,\xi^i}\E\!\left[\|x_T^{i}\|_{Q_2^i}^2\!+\!\sum\nolimits_{k=0}^{T-1} \!\|x_k^{i}\|_{Q_1^i}^2\!+\!\|u_k^{i}\|_{R^i}^2\!+\theta_k^{i^\top}\!\Lambda\right]\!\bigg].
%&\frac{1}{N}\sum_{i=1}^N\min_{\pi} \min_{\gamma^i,\xi^i}\E\!\left[\|x_T^{i}\|_{Q_2^i}^2\!+\!\sum_{k=0}^{T-1} \!\|x_k^{i}\|_{Q_1^i}^2\!+\!\|u_k^{i}\|_{R^i}^2\!+\vartheta_k^{i^\top}\!\Lambda\right]
\end{align}
where, for the first term of (\ref{prob:global_OP_outer}), we obtain the following due to the one-directional independence of $\vartheta_k^i$ from $u_k^i$
\begin{align}\nonumber
&J^i(u^i,\vartheta^i)\!=\!\E\left[\E\!\left[\sum\nolimits_{k=0}^{T-1} \!\vartheta_{\!k}^{i^\top}\!\!\!\Lambda\Big|\tilde{\mathcal{I}}_k\!\right]\right]+\\\nonumber
&\E\left[\E\!\left[\|x_T^{i}\|_{Q_2^i}^2\!+\!\!\sum\nolimits_{k=0}^{T-1} \!\|x_k^{i}\|_{Q_1^i}^2\!+\!\|u_k^{i}\|_{R^i}^2\!\Big|\mathcal{I}_k^i,\tilde{\mathcal{I}}_k\!\right]\right].
%\\\label{eq:intermed1}
%&\!\!\!\!\!\!\E\!\left[\E\!\left[\|x_T^{i}\|_{Q_2^i}^2\!+\!\!\sum\nolimits_{k=0}^{T-1} \!\|x_k^{i}\|_{Q_1^i}^2\!+\!\|u_k^{i}\|_{R^i}^2\Big|\mathcal{I}_k^i,\tilde{\mathcal{I}}_k\!\right]\!\Big|\tilde{\mathcal{I}}_k\!\right]\!+\!\nonumber\\
%&\!\!\!\!\!\!\E\!\left[\sum\nolimits_{k=0}^{T-1} \!\!\vartheta_k^{i^\top}\!\Lambda\Big|\tilde{\mathcal{I}}_k\!\right].
\end{align}
We define $V_k^i=\|x_T^{i}\|_{Q_2^i}^2\!+\!\sum_{t=k}^{T-1} \!\|x_t^{i}\|_{Q_1^i}^2\!+\!\|u_t^{i}\|_{R^i}^2$. Since $\gamma^i$ is a local policy and its decision outcome $u^i$ is independent of all sub-systems $j\!\neq \!i$, and moreover, $\pi$ is independent of all $\gamma_i$'s, the optimal cost-to-go can be expressed as
\begin{align}
\min_{\substack{\gamma^i_{[k,T-1]}\\\pi_{[k,T-1]}}}J^i(u^i,\vartheta^i)=\!&\min_{\pi_{[k,T-1]}}\E\!\bigg[\min_{\gamma^i_{[k,T-1]}} \E\left[V_k^i\big|\mathcal{I}_k^i\right]+\!\!\\\nonumber
&\min_{\pi_{[k,T-1]}}\E\!\Big[\sum\nolimits_{t=k}^{T-1} \!\vartheta_t^{i^\top}\!\Lambda\big|\tilde{\mathcal{I}}_k\Big]\Big|\tilde{\mathcal{I}}_k\bigg]
\end{align}
For $J^i(u^i,\theta^i)$, we know $\bar{\mathcal{I}}_k^i\!\subseteq \!\mathcal{I}_k^i$, $\forall k$, from (\ref{set:controller-information}) and (\ref{set:DC-information}). Moreover, $u_k^i$ and $\theta_k^i$ are measurable w.r.t. $\mathcal{I}_k^i$ and $\bar{\mathcal{I}}_k^i$, respectively. Therefore, employing the tower property\footnote{For a random variable $X$ defined on a probability space with sigma-algebra $\mathcal{F}$, if $\E[X]\!<\!\infty$, then for any two sub-sigma-algebras $\mathcal{F}_1\!\subseteq \!\mathcal{F}_2\!\subseteq\! \mathcal{F}$, $\E[\E[X|\mathcal{F}_2]|\mathcal{F}_1]\!=\!\E[X|\mathcal{F}_1]$ \textit{almost surely}.}, and also using the law of total expectation\footnote{If the random variable $X$ is $\mathcal{F}$-measurable, then $\E[\E[X|\mathcal{F}]]=\E[X]$.}, we re-write (\ref{eq:local_objective}) as 
%can be conditioned on only $\mathcal{I}_k^i$. Using the tower property\footnote{For a random variable $X$ defined on a probability space with sigma-algebra $\mathcal{F}$, if $\E[X]\!<\!\infty$, then for any two sigma-algebras $\mathcal{F}_1\!\subseteq \!\mathcal{F}_2\!\subseteq\! \mathcal{F}$, $\E[\E[X|\mathcal{F}_2]|\mathcal{F}_1]\!=\!\E[X|\mathcal{F}_1]$ \textit{almost surely}. As a special case, for any two random variables $\{Y,Z\}$, $\E[\E[X|Y,Z]|Y]\!=\!\E[X|Y]$ \textit{a.s.}, \cite[Ch. 4]{bertsekas2008introduction}.} we obtain
%Moreover, $u_k^i$ is $\mathcal{I}_k^i$-measurable, and $\theta_k^i$ is itself $\bar{\mathcal{I}}_k^i$-measurable. In addition, the variable $\vartheta_k^i$ which is $\tilde{\mathcal{I}}_k$-measurable is $u_k^i$-independent but $\theta_k^i$-dependent. Having these relationships, 
\begin{align*}
&J^i(u^i,\theta^i)\!=\\
&\E\!\left[\E\!\left[\E\!\left[\|x_T^{i}\|_{Q_2^i}^2\!+\!\sum\nolimits_{k=0}^{T-1} \!\|x_k^{i}\|_{Q_1^i}^2\!+\!\|u_k^{i}\|_{R^i}^2\!+\theta_k^{i^\top}\!\Lambda\Big|\mathcal{I}_k^i\right]\Big|\bar{\mathcal{I}}_k^i\!\right]\right].
\end{align*}
Hence, introducing $C_k^i(u^i,\theta^i)=V_k^i+\sum_{t=k}^{T-1}\theta_t^{i^\top}\!\Lambda$, we obtain
\begin{align*}
\min_{\substack{\gamma^i_{[k,T-1]}\\\xi^i_{[k,T-1]}}}\!\!J^i(u^i,\theta^i)\!=\!\E\!\bigg[\min_{\xi_{[k,T-1]}^i}\!\E\!\bigg[\!\min_{\gamma_{[k,T-1]}^i}\!\!\E\!\left[C_k^i(u^i,\theta^i)|\mathcal{I}_k^i\right]\bigg|\bar{\mathcal{I}}_k^i\bigg]\bigg]
\end{align*}
%where, the second equality follows from the property that $\E[X|\mathcal{H}]\!=\!X$ if $X$ is a $\mathcal{H}$-measurable random variable, and the fact that $\theta_k^i$ is measurable w.r.t. both $\mathcal{I}_k^i$ and $\bar{\mathcal{I}}_k^i$. 
%where, $C_k^i(u^i,\theta^i)=V_k^i+\sum_{t=k}^{T-1}\theta_t^{i^\top}\!\Lambda$, and $C_k^i(u^i,\vartheta^i)=V_k^i+\sum_{t=k}^{T-1}\vartheta_t^{i^\top}\!\Lambda$, and, $V_k^i=\|x_T^{i}\|_{Q_2^i}^2\!+\!\sum_{t=k}^{T-1} \!\|x_t^{i}\|_{Q_1^i}^2\!+\!\|u_t^{i}\|_{R^i}^2$. 
Finally, we can re-express (\ref{prob:global_OP_outer}) as
%\begin{align*}
%&\!\min_{\gamma^i,\xi^i,\pi} J=\!\frac{1}{N}\sum_{i=1}^N\E\Bigg[\min_{\pi}\E\!\left[\min_{\gamma^i,\pi} \E\left[V_0^i\Big|\mathcal{I}_0^i,\tilde{\mathcal{I}}_0\right]\Big|\tilde{\mathcal{I}}_0\right]+\\
%&\!\min_{\pi}\E\!\left[\sum_{k=0}^{T-1} \!\vartheta_k^{i^\top}\!\!\Lambda\Big|\tilde{\mathcal{I}}_0\!\right]\!-\!\min_{\xi^i}\E\!\left[\min_{\gamma^i}\E\!\left[V_0^i\!+\!\!\sum_{k=0}^{T-1} \!\theta_k^{i^\top}\!\!\Lambda\Big|\mathcal{I}_0^i\right]\!\Big|\bar{\mathcal{I}}_0^i\right]\!\Bigg]
%\end{align*}
%Define the cost-to-go function at time-step 0, as follows:
%\begin{equation*}
%J_0^{\ast}\!=\frac{1}{N}\sum_{i=1}^N\min_{\xi^i,\pi}\E\!\left[\min_{\gamma^i,\pi}\E\!\left[C_0^i(\gamma^i,\xi^i,\pi)\big|\mathcal{I}_0^i,\tilde{\mathcal{I}}_0\right]\!\big|\bar{\mathcal{I}}_0^i,\tilde{\mathcal{I}}_0\right]\!.
%\end{equation*}
%Then, the optimization problem (\ref{prob:global_OP_outer}) can be expressed in the following compact form
%\begin{equation}
%\min_{\gamma^i,\xi^i,\pi} J=\E\left[J_0^{\ast}\right].
%\end{equation}
%Further, as $\gamma^i$ is a local control policy and does not incorporate state information of sub-systems $j\!\neq \!i$ to decide on $u^i$, the optimization problem reduces to
\begin{align}\label{eq:global_cost_separation}
\min_{\gamma^i,\xi^i,\pi} J&=\!\frac{1}{N}\sum\nolimits_{i=1}^N\E\Bigg\{\!\!\min_{\pi}\E\!\left[\min_{\gamma^i} \E\left[V_0^i\big|\mathcal{I}_0^i\right]\Big|\tilde{\mathcal{I}}_0\right]\\\nonumber
&+\min_{\pi}\E\!\bigg[\sum\nolimits_{k=0}^{T-1} \!\vartheta_k^{i^\top}\!\Lambda\Big|\tilde{\mathcal{I}}_0\bigg]\\\nonumber
&-\!\min_{\xi^i}\E\!\bigg[\min_{\gamma^i}\E\!\bigg[V_0^i\!+\!\sum\nolimits_{k=0}^{T-1} \!\theta_k^{i^\top}\!\Lambda\Big|\mathcal{I}_0^i\bigg]\Big|\bar{\mathcal{I}}_0^i\bigg]\!\Bigg\}.
\end{align}
The sole $\gamma^i$-dependent term in the above expression is $\E[V_0^i|\mathcal{I}_0^i]$, and since this term is minimized only by the control law $\gamma^i$, it coincides with the standard LQG problem. Therefore, for all $k\in[0,T-1]$, the following  control law  solves the inner optimization problem $\min_{\gamma^i} \E\left[V_0^i|\mathcal{I}_0^i\right]$
\begin{align}\label{eq:cost-to-go}
u_{[k,T-1]}^{i,\ast}&=\gamma^{i,\ast}_{[k,T-1]}(\mathcal{I}_k^i)=\argmin_{\gamma^i_{[k,T-1]}} \E\left[V_k^{i}|\mathcal{I}_k^i\right]\\\nonumber
&=\argmin_{\gamma^i_{[k,T-1]}}\E\!\left[\|x_T^{i}\|_{Q_2^i}^2\!+\!\!\sum\nolimits_{t=k}^{T-1} \!\|x_t^{i}\|_{Q_1^i}^2\!+\!\|u_t^{i}\|_{R^i}^2\big|\mathcal{I}_k^i\right]\!.
\end{align}
As (\ref{eq:cost-to-go}) is a standard LQG problem, we drop the derivation of $\gamma^{i,\ast}$ for brevity. This is, however, known that the optimal law $\gamma_k^{i,\ast}$ and gain $L_k^{i,\ast}$ in (\ref{eq:CE-law}) and (\ref{eq:CE-gain}) are the solutions of the problem (\ref{eq:cost-to-go}). (Full derivation can be found in \cite{8405590}.) 
\end{proof}

\subsection{Proof of Theorem \ref{thm:impassiveDC}}\label{Append:thm2}

\begin{proof}
The two assumptions on the independence of $\pi_k$ from $\gamma_k^i$'s, $i\!\in\!\mathrm{N}$, and $\bar{\mathcal{I}}_k^i\!\subseteq \!\mathcal{I}_k^i$ hold, so we begin from (\ref{eq:global_cost_separation}).
Recall that $\vartheta_{[0,k-1]}^i\notin\bar{\mathcal{I}}_k^i$, hence, to decide $\theta_k^i$, the delay controller presumes that the control signal is generated according to $\theta_{[0,k-1]}^i$ not $\vartheta_{[0,k-1]}^i$. We derived the optimal control policy that minimizes the sole $\gamma^i$-dependent term $V^i_0$ in (\ref{eq:global_cost_separation}), therefore, the optimal impassive delay control policy $\xi_k^{i,\ast}(\bar{\mathcal{I}}_k^i)$ will be obtained simply by minimizing the local LQG cost function $J^i(u^{i,\ast},\theta^i)$, i.e., $\forall k\in[0,T-1]$
\begin{equation}\label{eq:proof-delay-control}
\!\theta_{[k,T-1]}^{i,\ast}\!=\argmin_{\xi_{[k,T-1]}^i}\E\!\left[V^{i,\ast}_k(\gamma^{i,\ast}\!,\xi^i)\!+\!\sum\nolimits_{t=k}^{T-1}\theta_t^{i^\top}\!\!\Lambda\big|\bar{\mathcal{I}}_k^i\right]\!.\!
\end{equation}
Recalling Remark 2, we compute $V^{i,\ast}_k(\gamma^{i,\ast},\xi^i)$ at the impassive delay controller side. From the estimator dynamics (\ref{eq:estimator_dynamics}) and system dynamics (\ref{eq:sys_model}), the estimation error $e_k^i$ evolves as
\begin{equation*}
e_k^i=\sum\nolimits_{l=1}^{\tau_k^i}\sum\nolimits_{j=l}^{\tau_k^i}\bar{b}_{j,k}^iA_i^{l-1}w_{k-l}^i,
\end{equation*}
where $b_{j,k}^i$ in (\ref{eq:estimator_dynamics}) is replaced by $\bar{b}_{j,k}^i$ because the delay controller has no knowledge about the variables $\{\vartheta_0^i,\ldots,\vartheta_{k-1}^i\}$ (the plant controller and the collocated estimator have this knowledge). Since $\bar{\mathcal{I}}_k^i\subseteq \mathcal{I}_k^i$, it is, moreover, straightforward to compute $\E[\E[e_k^ie_k^{i^\top}|\mathcal{I}_k^i]|\bar{\mathcal{I}}_k^i]=\E[e_k^ie_k^{i^\top}|\bar{\mathcal{I}}_k^i]$, as follows:
\begin{align*}
\E[e_k^ie_k^{i^\top}\big|\bar{\mathcal{I}}_k^i]&=\sum\nolimits_{l=1}^{\tau_k^i}\sum\nolimits_{j=l}^{\tau_k^i}\bar{b}_{j,k}^i\E[A_i^{l-1}w_{k-l}^i w_{k-l}^{i^\top}A_i^{{l-1}^\top}]\\
%&=\sum\nolimits_{l=1}^{\tau_k^i}\sum\nolimits_{j=l}^{\tau_k^i}\bar{b}_{j,k}^iA_i^{l-1}\E[w_{k-l}^i w_{k-l}^{i^\top}]A_i^{{l-1}^\top}\\
&=\sum\nolimits_{l=1}^{\tau_k^i}\sum\nolimits_{j=l}^{\tau_k^i}\bar{b}_{j,k}^iA_i^{l-1}\Sigma_{k-l}^i A_i^{{l-1}^\top},
\end{align*} 
where, $\Sigma_{k-l}^i\!=\!\Sigma_{x_0}^i$, $k\!<\!l$, and $\Sigma_{k-l}^i=\Sigma_{w}^i$, $k\geq l$. Having this and noting that $\bar{\mathcal{I}}_0^i=\{A_i, B_i, Q_1^i, Q_2^i, R^i, \Sigma_w^i,\Sigma_{x_0}^i\}$, we can rewrite $\E[V^{i,\ast}_0(\gamma^{i,\ast},\xi^i)|\bar{\mathcal{I}}_0^i]$ as follows
\begin{align}\label{eq:proof-optimal-value-function}
\E&[V^{i,\ast}_0(\gamma^{i,\ast},\xi^i)|\bar{\mathcal{I}}_0^i]=\|\!\E\left[x_0^i\right]\!\|^2_{P_0^i}+\!\sum\nolimits_{t=1}^T \!\Tr (P_t^i \Sigma_w^i)\\\nonumber
&+\Tr(P_0^i\sum\nolimits_{l=1}^{\tau_0^i}\sum\nolimits_{j=l}^{\tau_0^i}\bar{b}_{j,0}^iA_i^{{l-1}^\top}\Sigma_{x_0}^i A_i^{l-1})\\\nonumber
&+\sum\nolimits_{t=0}^{T-1}\Tr(\tilde{P}_t^i \sum\nolimits_{l=1}^{\tau_t^i}\sum\nolimits_{j=l}^{\tau_t^i}\bar{b}_{j,t}^iA_i^{{l-1}^\top}\Sigma_{t-l}^i A_i^{l-1}).
\end{align}
As the only term in the expression above that is dependent on $\theta_{[0,T-1]}^i$ is the last term, the optimization problem (\ref{eq:proof-delay-control}) can equivalently be expressed, initiating from the time $k=0$, as
\begin{align*}
&\theta_{[0,T-1]}^{i,\ast}\!=\argmin_{\xi_{[0,T-1]}^i}\E\left[V^{i,\ast}_0(\gamma^{i,\ast},\xi^i)+\sum\nolimits_{t=0}^{T-1}\theta_t^{i^\top}\Lambda\big|\bar{\mathcal{I}}_0^i\right]=\\
&\argmin_{\xi_{[0,T-1]}^i}\sum_{t=0}^{T-1}\!\left[\Tr(\tilde{P}_t^i \sum_{l=1}^{\tau_t^i}\sum_{j=l}^{\tau_t^i}\bar{b}_{j,t}^iA_i^{{l-1}^\top}\!\Sigma_{t-l}^i A_i^{l-1})+\theta_t^{i^\top}\!\Lambda\right]
\end{align*}
The constraints of the problem (\ref{eq:opt-imp-del-cont}) are all linear and $\theta_k^i$ is binary-valued, hence the above problem is a MILP. 
Moreover, it is independent from both the noise realizations and $\vartheta_{[0,T-1]}$, thus $\theta_{[0,T-1]}^\ast$ can be computed offline. The constraint $\sum_{l=0}^D\theta_t^i(l)\!=\!1$ ensures that only one delay link is selected per-time, while the last two constraints look after convenient indexes for $\bar{b}_{j,k}^i$ for $k\!\geq \!D$ and $k\!<\!D$ (see the Corollary 1). 

To find $\pi^{\ast}$, we use a similar procedure to that of computing $\xi^{i,\ast}$, except $\vartheta_k^i$ is now computed knowing the information $\{\theta_{[0,k]}^{i,\ast},\vartheta_{[0,k-1]}^{i,\ast}\}$, $\forall i$. We compute $\E[V^{i,\ast}_0(\gamma^{i,\ast},\pi)|\tilde{\mathcal{I}}_0]$ that results in a similar expression as on the right side of the equality in (\ref{eq:proof-optimal-value-function}) with the exception being $\bar{b}_{j,t}^i$ replaced by $b_{j,t}^i$. Hence, from (\ref{eq:global_cost_separation}), and considering the resource constraint $\sum_{i=1}^N \vartheta_t^i(d)\leq c_d, \forall d\in\mathcal{D}$, and the latency deviation constraint $-\alpha_i\!\leq\!(\vartheta_t^i-\theta_t^{i,\ast})^\top \Delta\!\leq \beta_i, i\in \mathrm{N}$, we derive the optimal resource allocation offline from the following MILP:
\begin{align*}
&\vartheta_{[k,T-1]}^{\ast}\!=\argmin_{\pi_{[k,T-1]}}\frac{1}{N}\sum_{i=1}^N \E\!\bigg[V^{i,\ast}_k(\gamma^{i,\ast}\!,\pi^i)\!+\!\!\sum_{t=k}^{T-1}\vartheta_t^{i^\top}\!\Lambda\big|\tilde{\mathcal{I}}_k\bigg]\!=\\
&\argmin_{\pi_{[k,T-1]}}\frac{1}{N}\!\sum_{i=1}^N\sum_{t=k}^{T-1}\!\bigg[\vartheta_t^{i^\top} \!\!\Lambda\!+\!\sum_{l=0}^{\tau_{t}^i} \sum_{j=l}^{\tau_{t}^i} b_{j,t}^i \textsf{Tr}(\tilde{P}_{t}^i A_{i}^{{l-1}^{\textsf{T}}} \Sigma_{w}^i A_{i}^{l-1})\bigg]\!
%&\text{s. t.} \quad\; -\alpha_i\leq(\vartheta_t^i-\theta_t^{i,\ast})^\top \Delta\leq \beta_i, i\in\mathrm{N},\\\nonumber
%&\qquad \quad\sum\nolimits_{i=1}^N \vartheta_{t}^i(d)\leq c_d,\; \forall d\in\mathcal{D}, \; t\in[k,T-1].
\end{align*}
Since $\theta^{i,\ast}_{[0,T-1]}$ is computed offline from (\ref{eq:opt-imp-del-cont}) independent of $\vartheta^{i}_{[0,T-1]}$, we can set $k=0$ above to complete the proof.
%\begin{align*}
%\vartheta_{[k,T-1]}^{\ast}&\!=\argmin_{\pi_{[k,T-1]}}\frac{1}{N}\sum_{i=1}^N \E\!\bigg[V^{i,\ast}_k(\gamma^{i,\ast}\!,\pi^i)\!+\!\sum_{t=k}^{T-1}\vartheta_t^{i^\top}\!\Lambda\big|\tilde{\mathcal{I}}_k\\
%&\!-V^{i,\ast}_k(\gamma^{i,\ast}\!,\xi^{i,\ast})\!+\!\sum\nolimits_{t=k}^{T-1}\theta_t^{i,\ast^\top}\!\Lambda\big|\bar{\mathcal{I}}_k^i\bigg]\\
%&=\argmin_{\pi_{[k,T-1]}}\frac{1}{N}\sum\nolimits_{i=1}^N\sum\nolimits_{t=k}^{T-1}\bigg[(\vartheta_t^i-\theta_t^{i,\ast})^\top \Lambda\\
%&+\sum\nolimits_{l=0}^{\tau_{t}^i} \sum\nolimits_{j=l}^{\tau_{t}^i}(b_{j,t}^i-\bar{b}_{j,t}^i) \textsf{Tr}(\tilde{P}_{t}^i A_{i}^{{l-1}^{\textsf{T}}} \Sigma_{w}^i A_{i}^{l-1})\bigg].
%\end{align*}
%Defining $\Delta b_{j,t}^i\triangleq b_{j,t}^i-\bar{b}_{j,t}^i$ and setting the time to initial instance $k=0$, the proof is then readily complete.
%\begin{align*}
%\E[V^{i,\ast}_0(\gamma^{i,\ast},\xi^i)|\tilde{\mathcal{I}}_0]&=\|\E\left[x_0^i\right]\|^2_{P_0^i}+\sum_{t=1}^T \Tr (P_t^i \Sigma_w^i)\\&+\Tr(P_0^i\sum_{l=1}^{\tau_0^i}\sum_{j=l}^{\tau_0^i}\bar{b}_{j,0}A_i^{{l-1}^\top}\Sigma_{x_0}^i A_i^{l-1})\\
%&+\sum_{t=0}^{T-1}\Tr(\tilde{P}_t^i \sum_{l=1}^{\tau_t^i}\sum_{j=l}^{\tau_t^i}\bar{b}_{j,t}A_i^{{l-1}^\top}\Sigma_{t-l}^i A_i^{l-1}).
%\end{align*}
\end{proof}

\subsection{Proof of Corollary \ref{corol:performance-comparison}}\label{Append:corol3}
\begin{proof}
The control policy $\gamma^{i,\ast}$ follows (\ref{eq:CE-law}) for both impassive and reactive scenarios, so we only compare the optimal cost values of the joint policies $(\xi^{i,\ast},\pi^\ast)$ derived from Theorems \ref{thm:impassiveDC} and \ref{thm:reactiveDC}. Define $(\bar{\theta}^{i,\ast},\bar{\vartheta}^{i,\ast})$ and $(\tilde{\theta}^{i,\ast},\tilde{\vartheta}^{i,\ast})$, respectively, as the joint optimal impassive and reactive delay control and resource allocation variables over time horizon $[0,T]$. First assume $\bar{\theta}^{i,\ast}\!=\!\tilde{\theta}^{i,\ast}$, then $\bar{b}_{j,t}^i\!=\!\tilde{b}_{j,t}^i, \!\!\; \forall t$ must hold from (\ref{eq:opt-imp-del-cont}) and (\ref{eq:thm3-del-ctrl}), which leads to $\bar{\vartheta}^{i,\ast}\!=\!\tilde{\vartheta}^{i,\ast}$ from (\ref{eq:opt-imp-res-alloc}) and (\ref{eq:thm3-res-manager}). %Since the only difference between the optimization problems (\ref{eq:opt-imp-del-cont}) and (\ref{eq:thm3-del-ctrl}) is in $\bar{b}_{j,t}^i$ and $\tilde{b}_{j,t}^i$, if $\bar{\theta}^{i,\ast}\!=\!\tilde{\theta}^{i,\ast}$, then $\bar{b}_{j,t}^i\!=\!\tilde{b}_{j,t}^i, \!\!\; \forall t$, which means knowing $\vartheta^{i,\ast}$ has had no effect in computing $\theta^{i,\ast}_k$. 
Having the problems (\ref{eq:opt-imp-del-cont}) and (\ref{eq:thm3-del-ctrl}), and also (\ref{eq:opt-imp-res-alloc}) and (\ref{eq:thm3-res-manager}) coincide, it easily leads to $J^{i,\ast}_{\text{Re}}\!=\! J^{i,\ast}_{\text{Im}}$ and $J^\ast_{\text{Re}}\!=\! J^\ast_{\text{Im}}$. 

Now assume $\bar{\theta}^{i,\ast}\!\neq\!\tilde{\theta}^{i,\ast}$. Due to the fact that the information set $\bar{\mathcal{I}}^i_{[0,T-1]}$ associated with the impassive approach (given in (\ref{set:DC-information-impassive})) is a subset of its counterpart associated with the reactive approach (given in (\ref{set:DC-information})), any optimal solution of the problem (\ref{eq:opt-imp-del-cont}) can also be obtained from the problem (\ref{eq:thm3-del-ctrl}) if it is optimal for the latter.
Hence, if $\bar{\theta}^{i,\ast}\!\neq\!\tilde{\theta}^{i,\ast}$, then $\bar{\theta}^{i,\ast}$ is not the optimal solution of  problem (\ref{eq:thm3-del-ctrl}), which implies $J^{i,\ast}_{\text{Re}}(u^{i,\ast}\!,\tilde{\theta}^{i,\ast})\!<\!J^{i,\ast}_{\text{Im}}(u^{i,\ast}\!,\bar{\theta}^{i,\ast})$.
For the resource allocation, assume $\tilde{\vartheta}^{i,\ast}$ be the optimal solution of the problem (\ref{eq:thm3-res-manager}) such that $\tilde{\vartheta}^{i,\ast}\!\neq\!\bar{\vartheta}^{i,\ast}$ while $J^\ast_{\text{Re}}\!>\! J^\ast_{\text{Im}}$. 
Recall that $\bar{\vartheta}^{i,\ast}$ is the optimal resource allocation in response to $\bar{\theta}^{i,\ast}$ computed from (\ref{eq:opt-imp-del-cont}), while we know if $\tilde{\vartheta}^{i,\ast}\!\neq\!\bar{\vartheta}^{i,\ast}$, then $\bar{\theta}^{i,\ast}\!\neq\!\tilde{\theta}^{i,\ast}$. 
Knowing this, together with $J^\ast_{\text{Re}}\!>\! J^\ast_{\text{Im}}$, implies that the joint policy $(\bar{\theta}^{i,\ast}\!,\bar{\vartheta}^{i,\ast})$ outperforms $(\tilde{\theta}^{i,\ast}\!,\tilde{\vartheta}^{i,\ast})$, which requires $J^{i,\ast}_{\text{Re}}(u^{i,\ast}\!,\tilde{\theta}^{i,\ast})\!>\!J^{i,\ast}_{\text{Im}}(u^{i,\ast}\!,\bar{\theta}^{i,\ast})$ to hold. 
This, however, contradicts the previous condition ensuring that if $\bar{\theta}^{i,\ast}\!\neq\!\tilde{\theta}^{i,\ast}$, then $J^{i,\ast}_{\text{Re}}(u^{i,\ast}\!,\tilde{\theta}^{i,\ast})\!<\!J^{i,\ast}_{\text{Im}}(u^{i,\ast}\!,\bar{\theta}^{i,\ast})$, and hence the condition $J^\ast_{\text{Re}}\!>\! J^\ast_{\text{Im}}$ cannot be realized if $\tilde{\vartheta}^{i,\ast}\!\neq\!\bar{\vartheta}^{i,\ast}$.
\end{proof}

%Appendixes, if needed, appear before the acknowledgment.

%\section*{Acknowledgment}
%
%The preferred spelling of the word ``acknowledgment'' in American English is 
%without an ``e'' after the ``g.'' Use the singular heading even if you have 
%many acknowledgments. Avoid expressions such as ``One of us (S.B.A.) would 
%like to thank $\ldots$ .'' Instead, write ``F. A. Author thanks $\ldots$ .'' In most 
%cases, sponsor and financial support acknowledgments are placed in the 
%unnumbered footnote on the first page, not here.

%\section*{References and Footnotes}

\end{document}